\newtheorem{theorem}{Theorem}[section]
\newtheorem{lemma}[theorem]{Lemma}
\newtheorem{problem}[theorem]{Problem}
\newtheorem{proposition}[theorem]{Proposition}
\newtheorem{definition}[theorem]{Definition}
\newtheorem{assumption}[theorem]{Assumption}
\newtheorem{example}[theorem]{Example}
\newtheorem{remark}[theorem]{Remark}
\numberwithin{equation}{section}
\newcommand{\R}{{\mathbb{R}}}
\newcommand{\sys}{{\mathcal{S}}}
\newcommand{\B}{\mathcal{B}}
\newcommand{\I}{\mathcal{I}_d}
\newcommand{\N}{{\mathbb{N}}}
\newcommand{\eg}{{\it e.g.}}
\newcommand\norm[1]{\left\lVert#1\right\rVert}
\newcommand{\Let}{:=}
\newcommand{\intcc}[1]{\ensuremath{{\left[#1\right]}}}
\definecolor{myco}{rgb}{0.6, 0.73, 0.89}
\definecolor{myco1}{rgb}{0.0, 0.53, 0.74}
\begin{document}

\begin{abstract}
In this paper, we provide a compositional framework for synthesizing hybrid controllers for interconnected discrete-time control systems enforcing specifications expressed by co-B\"{u}chi automata. In particular, we first decompose the given specification to simpler reachability tasks based on automata representing the complements of original co-B\"{u}chi automata. Then, we provide a systematic approach to solve those simpler reachability tasks by computing corresponding control barrier functions. We show that such control barrier functions can be constructed compositionally by assuming some small-gain type conditions and composing so-called local control barrier functions computed for subsystems. We provide two systematic techniques to search for local control barrier functions for subsystems based on the sum-of-squares optimization program and counter-example guided inductive synthesis approach. Finally, we illustrate the effectiveness of our results through two large-scale case studies.         
\end{abstract}

\title[Compositional Construction of Control Barrier Functions]{Compositional Construction of Control Barrier Functions for Interconnected Control Systems$^\star$}
\author[P. Jagtap]{Pushpak Jagtap$^{1,\dagger}$}
\author[A. Swikir]{Abdalla Swikir$^{1,\dagger}$}
\author[M. Zamani]{Majid Zamani$^{2,3}$}
\address{$^1$Department of Electrical and Computer Engineering, Technical University of Munich, Germany.}
\email{\{pushpak.jagtap,abdalla.swikir\}@tum.de}
\address{$^2$Computer Science Department, University of Colorado Boulder, USA.}
\email{Majid.Zamani@colorado.edu}
\address{$^3$Computer Science Department, Ludwig Maximilian University of Munich, Germany.}
\thanks{$^\dagger$ The authors contributed equally to this work.}
\thanks{$^\star$ This work was supported in part by the H2020 ERC Starting Grant AutoCPS (grant agreement No. 804639), the German Research Foundation (DFG) through the grants ZA 873/1-1, and the TUM International Graduate School of Science and Engineering (IGSSE)}
\maketitle

\section{Introduction}
Formal synthesis of controllers for dynamical systems against complex logic specifications has gained considerable attentions in the last few years. These specifications are usually expressed using temporal logic formulae or (in)finite strings over automata. In the literature, the abstraction-based approaches are popular to solve such synthesis problems.  However, since the abstraction-based approaches usually require discretization of the state and input sets of concrete systems, the synthesis problem becomes very intractable for large-scale control systems.
To address this scalability issue, several results were proposed by utilizing the compositional abstraction-based synthesis where the synthesis is performed by computing the abstractions and (possibly) controllers for smaller subsystems; see the results in \cite[and references therein]{meyer,pola7,SWIKIR2019,8728138} for more details. 

Alternatively, a discretization-free approach, based on control barrier functions, has shown a potential to solve the formal synthesis problems as well. Assuming a prior knowledge of control barrier functions, several techniques have been recently introduced to ensure the safety of dynamical systems (see \cite[ and the references
therein]{ames2016control,ames2019control}), or the satisfaction of a set of signal temporal logic tasks for multi-agent systems \cite{Dimarogonas,Dimarogonas2}.
The results in \cite{jagtap2019formal} provide techniques to search for parametric control barrier functions to synthesize controllers for stochastic control systems enforcing a class of temporal logic specifications over finite time horizons. 
Though promising, the computational complexity of searching for parametric control barrier functions grows in polynomial time \cite{jagtap2018temporal,wongpiromsarn2015automata} with respect to the dimension of the system and, hence, the existing approaches \cite{ames2016control,ames2019control,jagtap2019formal} will also become computationally intractable while dealing with large-scale interconnected systems.

Motivated by the above results and their limitations, this work proposes a controller synthesis approach for large-scale systems against complex logic specifications via compositional construction of control barrier functions.  
To the best of our knowledge, this paper is the first to utilize compositional construction of control barrier functions for synthesizing hybrid controllers for interconnected discrete-time control systems against specifications expressed by co-B\"{u}chi automata. In order to achieve this, we first decompose the given specification to simpler reachability tasks based on automata representing the complements of original co-B\"{u}chi automata. Then, we provide a systematic approach to solve those simpler tasks by computing corresponding control barrier functions. Those control barrier functions are obtained by composing so-called local control barrier functions while utilizing some small-gain type conditions.  
In the final step, we combine those control barrier functions and controllers obtained for simpler tasks to obtain hybrid controllers ensuring the desired complex specifications over large-scale interconnected systems. 
In addition, we provide two systematic approaches to search for parametric local control barrier functions under suitable assumptions on the dynamics of the subsystems. The first approach is using the sum-of-square optimization \cite{parrilo2003semidefinite} and the second one is utilizing a counter-example guided inductive synthesis approach \cite{ravanbakhsh}. 

Finally, we demonstrate the effectiveness of the proposed results on two large-scale case studies with $10^4$ state dimensions. First, we apply our results to the temperature regulation in a circular building by synthesizing controllers for a network containing $N$ rooms for any $N\geq3$ ensuring the satisfaction of a specification given by a deterministic co-B\"{u}chi automaton. Additionally, we also apply the proposed techniques to a nonlinear example of a fully connected network of Kuramoto oscillators and synthesize hybrid controllers ensuring the satisfaction of a given specification.

\section{Notation and Preliminaries}\label{1:II}
\subsection{Notation}
We denote by $\R$ and $\N$ the set of real numbers and non-negative integers,  respectively.
These symbols are annotated with subscripts to restrict them in
an obvious way, \eg, $\R_{>0}$ denotes the positive real numbers. We denote the closed, open, and half-open intervals in $\R$ by $[a,b]$,
$(a,b)$, $[a,b)$, and $(a,b]$, respectively. For $a,b\in\N$ and $a\le b$, we
use $[a;b]$, $(a;b)$, $[a;b)$, and $(a;b]$ to
denote the corresponding intervals in $\N$.
Given $N\in\N_{\ge1}$, vectors $\nu_i\in\R^{n_i}$, $n_i\in\N_{\ge1}$, and $i\in[1;N]$, we
use $\nu=[\nu_1;\ldots;\nu_N]$ to denote the vector in $\R^n$ with
$n=\sum_i n_i$ consisting of the concatenation of vectors~$\nu_i$. 
Note that given any  $\nu\in\R^{n}$, $\nu \ge 0$ if $\nu_i \ge 0$ for any $i \in [1;n]$.
We use $\textbf{1}_n$ to denote a vector in $\R^n$ with all elements being one.
The individual elements in a matrix $A\in \R^{m\times n}$ are denoted by $\{A\}_{ij}$, where  $i\in\intcc{1;m}$ and $j\in\intcc{1;n}$. We use $\norm{\cdot}$ to denote the infinity norm. Given any $a\in\R$, $\vert a\vert$ denotes the absolute value of $a$. Given sets $X$ and $Y$, we denote by $f:X\rightarrow Y$ an ordinary map from $X$ to $Y$.

We denote the empty set by $\emptyset$. 
Given a set $S$, the notation $|S|$ denotes the cardinality of $S$ and $S^*$ and $S^\omega$ denote the set of all finite and infinite strings over $S$, respectively. 
Given sets $U$ and $S\subset U$, the complement of $S$ with respect to $U$ is defined as $U\backslash S = \{x : x \in U, x \notin S\}.$
We use notations $\mathcal{K}$ and $\mathcal{K}_\infty$
to denote different classes of comparison functions, as follows:
$\mathcal{K}=\{\alpha:\mathbb{R}_{\geq 0} \rightarrow \mathbb{R}_{\geq 0} |$ $ \alpha$ is continuous, strictly increasing, and $\alpha(0)=0\}$; $\mathcal{K}_\infty=\{\alpha \in \mathcal{K} |$ $ \lim\limits_{r \rightarrow \infty} \alpha(r)=\infty\}$.
For $\alpha,\gamma \in \mathcal{K}_{\infty}$ we write $\alpha<\gamma$ if $\alpha(s)<\gamma(s)$ for all $s>0$. Function $\mathcal{I}_d\in\mathcal{K}_{\infty}$ denotes the identity one. We use notations $\top$ and $\bot$ to represent \texttt{true} and \texttt{false}, respectively.

\subsection{Interconnected Control Systems}\label{Interconnect_sys1}
First, we define discrete-time control subsystems which will be later interconnected to form a large-scale discrete-time control system.  
\begin{definition}  
	A control subsystem $\sys_i$ is a tuple
	\begin{align}
	\label{eq:controlsystem}
	\sys_i = (X_i,U_i,W_i,f_i,Y_i,h_i), \quad i \in [1;N],
	\end{align}
	where $X_i$, $U_i$, $W_i$, and $Y_i$ are the state set, the external input set, the internal input set, and the output set, respectively. 
	The function $ f_i: X_i \times U_i \times W_i \rightarrow X_i$ is the transition function and $h_i : X_i \rightarrow Y_i $ is the output function.	
	The discrete-time control subsystem $\sys_i $ is described by difference equations of the form
	\begin{align}\label{eq:two}
	\sys_i:\left\{
	\begin{array}{rl}
	\mathbf{x}_i(k+1)=&\!\!\!\! f_i(\mathbf{x}_i(k),\nu_i(k),\omega_i(k)),\\
	\mathbf{y}_i(k)=&\!\!\!\!h_i(\mathbf{x}_i(k)),
	\end{array}
	\right.
	\end{align}
	where $\mathbf{x}_i:\N\rightarrow X_i $, $\mathbf{y}_i:\N\rightarrow Y_i$, $\nu_i:\N\rightarrow U_i$, and $\omega_i:\N\rightarrow W_i$ are the state run, output run, external input run, and internal input run, respectively. 
\end{definition}  	
Now, we provide a formal definition of interconnected discrete-time control systems.
\begin{definition}
	\label{interconnectedsystem} 
	Consider $N \in \N_{\geq 1}$ control subsystems $\sys_i = (X_i,U_i,W_i,f_i,Y_i,h_i)$ with their inputs and outputs partitioned as
	\begin{align*}	
	w_i &\!=\!\! [w_{i1};\dots;w_{i(i-1)};w_{i(i+1)};\dots;w_{iN}],  W_i\!=\!\prod_{j=1,j\neq i}^{N} \!\!W_{ij},\\
	y_i &= [y_{i1};\dots;y_{iN}], Y_i=\prod_{j=1}^N  Y_{ij},
	\end{align*}
	with $w_{ij} \in W_{ij}$, $y_{ij} =h_{ij}(x_i) $ and output function
	\begin{align*}
	h_i(x_i) = [h_{i1}(x_i);\dots;h_{iN}(x_i)]~~ \text{with}~~ h_{ii}(x_i)=x_i.
	\end{align*}
	
	The interconnected control system $\sys=\mathcal{I}(\sys_1,\dots,\sys_N)$ is a tuple
	\begin{align}
	\label{interConnectedsys}
	\sys = (X,U,f),
	\end{align} described by the difference equation
	\begin{align}\label{int}
	\mathbf{x}(k+1)= f(\mathbf{x}(k),\nu(k)),
	\end{align}
	where $X=\prod_{i=1}^N X_i$, $U=\prod_{i=1}^N U_i$, and function    
	\begin{align}\notag
	f(x,u)&=[f_1(x_1,u_1,w_1);\dots;f_N(x_N,u_N,w_N)],
	\end{align} 
	where $x = [x_1;\dots;x_N]\in X$, $u = [u_1;\dots;u_N]\in U$, and the interconnection variables are constrained by $w_{ij} = y_{ji}$, $Y_{ji}\subseteq W_{ij}$, $\forall i,j \in [1;N], i\neq j$. Moreover, let  $\mathbf{x}_{x,\nu}$ denote a state run of $\sys$ starting from initial state $x\in X$ under
	input run $\nu:\N\rightarrow U$.
	An example of the interconnection of three control subsystems $\sys_1$, $\sys_2$,
	and $\sys_3$ is illustrated in Figure \ref{system1}.
\end{definition}
In the above definition, we assumed that one has access to the full state information of subsystems (i.e. $h_{ii}(x_i)=x_i$) for the sake of controller synthesis. However, for the sake of internal interconnections, we work with the outputs of states (i.e. $h_{ij},i,j\in[1;N],i\neq j$) (cf. Figure \ref{system1}). 
\begin{figure}
	\centering
	\includegraphics[scale=0.16]{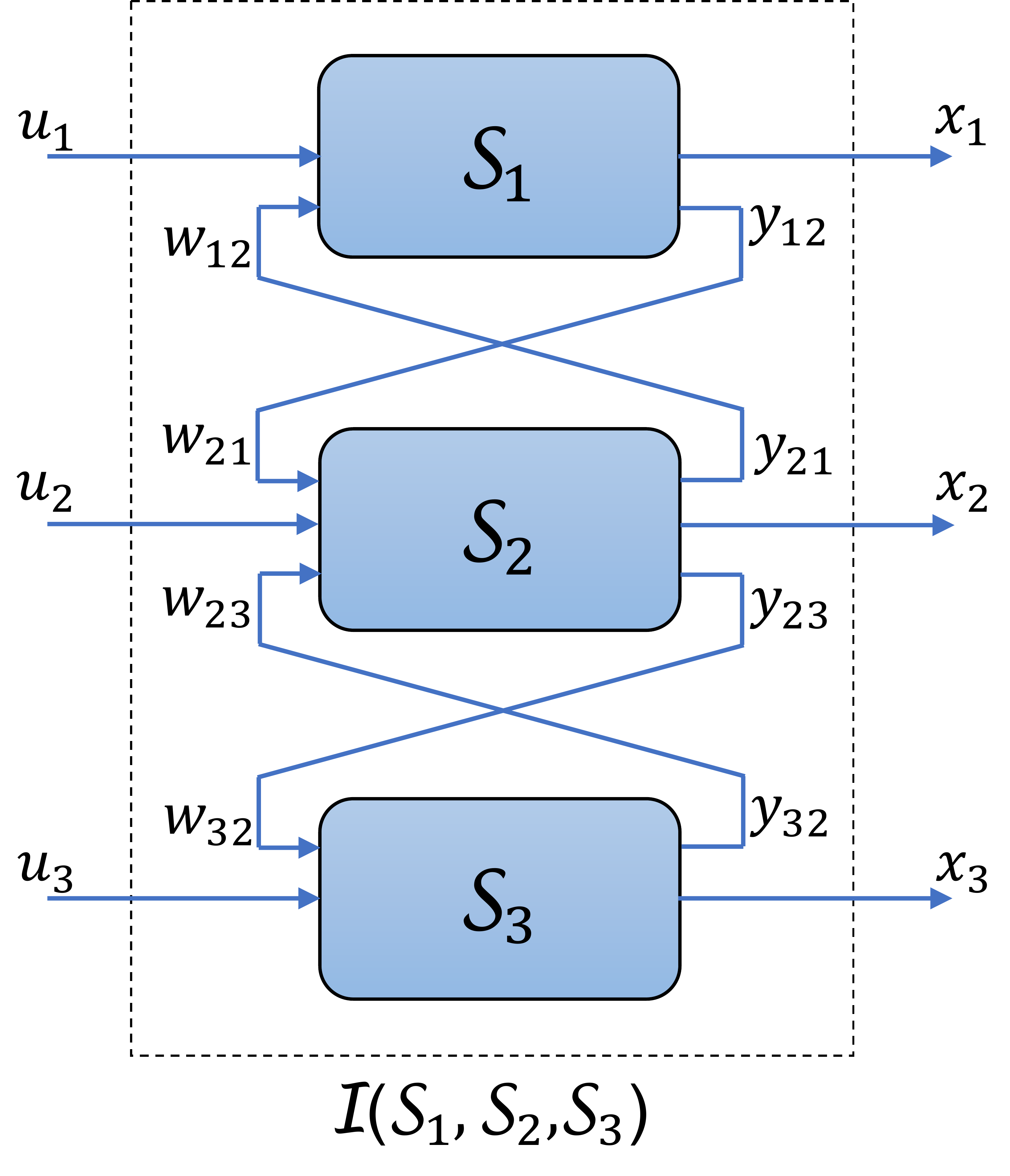}
	\caption{Interconnection of three control subsystems $\sys_1$, $\sys_2$, and $\sys_3$ with $h_{13}$ and $h_{31}$ being zero maps.}
	\label{system1}
	\vspace{-1em}
\end{figure}

We are interested in synthesizing control policies $\rho$ for system $\sys$ enforcing given complex specifications. Here, we consider \textit{history-dependent policies} given by $\rho=(\rho_0,\rho_1,\ldots,\rho_k,\ldots)$ with functions $\rho_k: \mathcal{H}_k\rightarrow U$, where $\mathcal{H}_k$ is the set of all $k$-histories $\mathbf h_k$ defined as $\mathbf h_k:=(\mathbf x(0),\nu(0),\mathbf x(1),\nu(1),\ldots,\mathbf x(k-1),\nu(k-1),\mathbf x(k))$. A subclass of those policies are called \emph{stationary} and are defined as $\rho=(\mathbf u,\mathbf u,\ldots,\mathbf u,\ldots)$ with a function $\mathbf u: X\rightarrow U$. In stationary policies, the mapping at time $k$ depends only on the current state $\mathbf x(k)$ and does not change over time.

\subsection{Class of Specifications}
\label{weak}
Here, we consider the class of specifications expressed by deterministic co-B{\"u}chi automata (DCA) \cite{LODI} as defined next.
\begin{definition}\label{DBA}
	A deterministic co-B{\"u}chi automaton (DCA) is a tuple $\mathcal{A}=(Q,Q_0,\Sigma,\delta,F)$, where $Q$ is a finite set of states, $Q_0\subseteq Q$ is a set of initial states, $\Sigma$ is a finite set of alphabet, $\delta: Q\times\Sigma\rightarrow Q$ is a transition function, and $F\subseteq Q$ is a set of final states.
\end{definition}
We use notation $q\overset{\sigma}{\longrightarrow} q'$ to denote transition $(q,\sigma,q')\in\delta$. We also denote the set of all successor states of a state $q\in Q$ by $\Delta(q)$. Consider an infinite state run $\textbf{q}=(q_0,q_1,\ldots)\in Q^{\omega}$ such that $q_0\in Q_0$, $q_i \overset{\sigma_i}{\longrightarrow} q_{i+1}$ for all $i\geq 0$ and let $\mathsf{Inf}(\textbf q)$ be the set of states that occurs infinitely many times in $\textbf{q}$.
An infinite word (a.k.a trace) $\sigma=(\sigma_0,\sigma_1,\ldots)\in \Sigma^{\omega}$ is accepted by DCA $\mathcal{A}$ if there exists an infinite state run $\textbf{q}$ such that $\mathsf{Inf}(\textbf q)\cap F=\emptyset$. The set of words accepted by $\mathcal{A}$ is called the accepting language of $\mathcal{A}$ and is denoted by $\mathcal{L}(\mathcal{A})$.

A deterministic B{\"u}chi automaton (DBA) is defined syntactically exactly as a deterministic co-B{\"u}chi automaton except that
its accepting runs are those for which $\mathsf{Inf}(\textbf q)\cap F\ne\emptyset$. Note that the complement of a deterministic co-B{\"u}chi automaton is a deterministic B{\"u}chi automaton \cite{LODI}.

In this work, we consider those specifications given by the accepting languages of DCA $\mathcal{A}$ defined over the set of atomic propositions $\Pi$, i.e., the alphabet\footnote{For properties expressed by DCA $\mathcal{A}$ over atomic propositions $\Pi$, $\mathcal{A}$ is usually
	constructed over the alphabet $\Sigma=2^{\Pi}$. Without loss of generality, we work with the set $\Pi$ directly as the alphabet rather than its power set.} $\Sigma=\Pi$.  We should highlight that the temporal logic specifications represented using \emph{obligation} properties \cite{manna2012temporal} (including boolean combinations of safety and guarantee properties) are all recognized by deterministic weak automata \cite{dax2007mechanizing} which are included in DCA. For other temporal logic formulae, one can readily check the existence of DCA using the tool \texttt{SPOT} \cite{duret2016spot}. 

\subsection{Satisfaction of Specifications by Interconnected Control Systems}
In this subsection, we define how the specification given by the accepting language of DCA $\mathcal{A}$ is satisfied by the system $\sys$ as in Definition \ref{interconnectedsystem}. To relate the state of the system to DCA $\mathcal{A}$, we use a measurable labeling function $L: X \rightarrow \Pi$, where $\Pi$ is the set of atomic propositions.
\begin{definition}\label{sys_trace1}
	Consider an interconnected control system $\sys = (X,U,f)$ as in Definition \ref{interconnectedsystem} and a specification expressed by DCA $\mathcal{A}=(Q,Q_0,\Pi,\delta,F)$. In order to reason about the given specification for the system $\sys$, we use a measurable labeling function $L: X \rightarrow \Pi$. In addition, consider an infinite state run $\mathbf{x}=(\mathbf x(0),\mathbf x(1),$ $\ldots) \in X^\omega$, and labeling function $L: X \rightarrow \Pi$. Then, the corresponding trace over $\Pi$ is given by $L(\mathbf{x}):=(\sigma_0,\sigma_1,\ldots) \in\Pi^\omega$, where $\sigma_k=L(\mathbf x(k))$ for all $k \in\{0,1,\ldots\}$.
\end{definition}
Note that we abuse the notation by using map $L(\cdot)$ over $X^\omega$, i.e., $L(\mathbf x(0),\mathbf x(1),\ldots)\equiv (L(\mathbf x(0)),L(\mathbf x(1)),\ldots)$.
Their distinction is clear from the context. Next we define the satisfaction of specifications by the control systems $\sys$.
\begin{definition}
	Consider an interconnected control system $\sys = (X,U,f)$ as in Definition \ref{interconnectedsystem}, a specification given by the accepting language of DCA $\mathcal{A}=(Q,Q_0,\Pi,\delta,F)$, and a labeling function $L: X \rightarrow \Pi$. We say that the state run of $\sys$ starting from initial state $x\in X$ under control policy $\rho$ satisfies the specification given by $\mathcal A$, denoted by $L(\mathbf{x}_{x,\rho})\models \mathcal A$, if $L(\mathbf{x}_{x,\rho})\in\mathcal{L}(\mathcal A)$. 
\end{definition}

\subsection{Problem Definition}
The main synthesis problem in this work is formally defined next. 
\begin{problem}\label{prob1}
	Given an interconnected control system $\sys\!=\! (X,U,f)$ as in Definition \ref{interconnectedsystem}, a specification given by the accepting language of DCA $\mathcal{A}=(Q,Q_0,\Pi,\delta,F)$ over a set of atomic propositions $\Pi=\{p_0,p_1,\ldots,$ $p_M\}$, and a labeling function $L: X \rightarrow \Pi$, compute a control policy $\rho$ (if existing) such that $L(\mathbf{x}_{x,\rho})\models \mathcal A$ for all $x\in L^{-1}(p_i)$ and some $i\in\{0,1,\ldots,M\}$.
\end{problem}
Finding a solution to Problem \ref{prob1} (if existing) is difficult in general. In this paper, we provide a method that is sound in solving the problem. To construct a control policy $\rho$, our approach utilizes the notion of control barrier functions as defined in the next section. Later, we provide a compositional approach on constructing such control barrier functions to make it tractable for large-scale systems.
\section{ Control Barrier Function}
In this section, we define the notion of control barrier function which will be used throughout the paper.
\begin{definition}\label{bc}
	A function $ \B:X \to \R_{\geq0}$ is a control barrier function for an interconnected control system $\sys=(X,U,f)$ as in Definition \ref{interconnectedsystem} if for any $ x\in X$ there exists an input $u\in U$ such that
	\begin{align}\label{bc3}
	\B(f(x,u))&\leq\kappa(\B(x)),	
	\end{align}
	for some $\kappa\in \mathcal{K}_{\infty}$ with $\kappa\leq \I$.
\end{definition}
Note that the above definition associates a stationary policy $\mathbf u: X\rightarrow U$ according to the existential quantifier on the input for any state $x \in X$. The importance of the existence of a control barrier function for system $\sys$ is shown in the following proposition.
\begin{proposition}\label{thm1}
	Consider an interconnected control system $\sys=(X,U,f)$, and sets $X_a,X_{b}\subseteq X$. Assume that there exits a control barrier function $ \B:X \to \R_{\geq0}$ as defined in Definition \ref{bc} with a stationary policy $\mathbf u:X\rightarrow U$ and constants $\epsilon_1,\epsilon_2\in\R_{>0}$ with $\epsilon_2\geq\epsilon_1$ such that 
	\begin{align}
	\B(x)&\leq \epsilon_1, \quad\quad\forall x\in X_a, \label{bc1}\\
	\B(x)&> \epsilon_2, \quad\quad\forall x\in X_{b}. \label{bc2}
	\end{align}
	Then, for the state run $\mathbf{x}_{x,\mathbf u}$ of $\sys$ starting from any initial state $x \in X_a$ and under corresponding policy $\mathbf u(\cdot)$, one has $\mathbf{x}_{x,\mathbf u}(k)\cap X_{b}\!=\!\emptyset$, $\forall k\in \N$. 	
\end{proposition}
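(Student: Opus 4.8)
The plan is to exploit the defining inequality of the control barrier function to show that $\B$ is non-increasing along the closed-loop trajectory generated by the stationary policy $\mathbf u$. Once this is established, the trajectory started in $X_a$ can never leave the sublevel set $\{x\in X : \B(x)\leq \epsilon_1\}$, and since $\epsilon_1\leq\epsilon_2$, inequalities \eqref{bc1}--\eqref{bc2} guarantee this sublevel set is disjoint from $X_b$, giving the desired avoidance property.

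First I would unpack the condition $\kappa\leq\I$. By definition of the ordering on $\mathcal{K}_\infty$, this means $\kappa(s)\leq s$ for all $s>0$ (together with $\kappa(0)=0$), so for the input $u=\mathbf u(x)$ selected by the stationary policy at any state $x$, inequality \eqref{bc3} yields
\begin{align*}
\B(f(x,\mathbf u(x)))\leq\kappa(\B(x))\leq\B(x).
\end{align*}
Thus a single step of the closed-loop dynamics never increases the value of $\B$.

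Next I would fix $x\in X_a$, set $\mathbf{x}_{x,\mathbf u}(0)=x$, and prove by induction on $k$ that $\B(\mathbf{x}_{x,\mathbf u}(k))\leq\epsilon_1$ for all $k\in\N$. The base case $k=0$ is exactly \eqref{bc1}, since $x\in X_a$. For the inductive step, writing $\mathbf{x}_{x,\mathbf u}(k+1)=f(\mathbf{x}_{x,\mathbf u}(k),\mathbf u(\mathbf{x}_{x,\mathbf u}(k)))$ and combining the one-step bound above with the induction hypothesis gives $\B(\mathbf{x}_{x,\mathbf u}(k+1))\leq\B(\mathbf{x}_{x,\mathbf u}(k))\leq\epsilon_1$. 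The avoidance claim then follows by contradiction: if $\mathbf{x}_{x,\mathbf u}(k)\in X_b$ for some $k$, then \eqref{bc2} would force $\B(\mathbf{x}_{x,\mathbf u}(k))>\epsilon_2\geq\epsilon_1$, contradicting the uniform bound just established. Hence $\mathbf{x}_{x,\mathbf u}(k)\notin X_b$ for all $k\in\N$.

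As for the main obstacle: the argument is in essence a one-line forward-invariance observation, so no step is genuinely hard. The only points demanding care are the correct reading of the comparison-function inequality $\kappa\leq\I$ — that it delivers the pointwise bound $\kappa(s)\leq s$ and hence monotone non-increase of $\B$ rather than merely asymptotic behaviour — and using the chain $\epsilon_2\geq\epsilon_1$ in the correct direction, so that the invariant sublevel set $\{\B\leq\epsilon_1\}$ is genuinely separated from the strictly-above-$\epsilon_2$ set $X_b$.
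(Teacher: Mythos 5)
Your proposal is correct and follows essentially the same route as the paper's proof: both exploit $\kappa\leq\I$ to conclude that $\B$ is non-increasing along the closed-loop trajectory, and then derive a contradiction between $\B(\mathbf x(k))>\epsilon_2$ and $\B(\mathbf x(k))\leq\B(\mathbf x(0))\leq\epsilon_1$ using $\epsilon_2\geq\epsilon_1$. Your version merely makes the induction explicit where the paper states the monotonicity in one line.
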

\begin{proof}
	We prove by contradiction. Consider a state run $\mathbf{x}_{x,\mathbf u}$ of $\sys$ that starts at some $x \in X_a$. Suppose $\mathbf{x}_{x,\mathbf u}$ reaches a state inside $X_b$. Following \eqref{bc1} and \eqref{bc2}, one has $\B(\mathbf x(0))\leq\epsilon_1$ and $\B(\mathbf x(k))>\epsilon_2$ for some $k\in\N$. Since $\B(\cdot)$ is a control barrier function and by using inequality \eqref{bc3}, one can conclude that $\epsilon_2<\B(\mathbf x(k))\leq \B(\mathbf x(0)) \leq\epsilon_1$. This contradicts  $\epsilon_2\geq\epsilon_1$ which completes the proof.
	\begin{figure}
		\begin{center}
			\includegraphics[height=3.4cm]{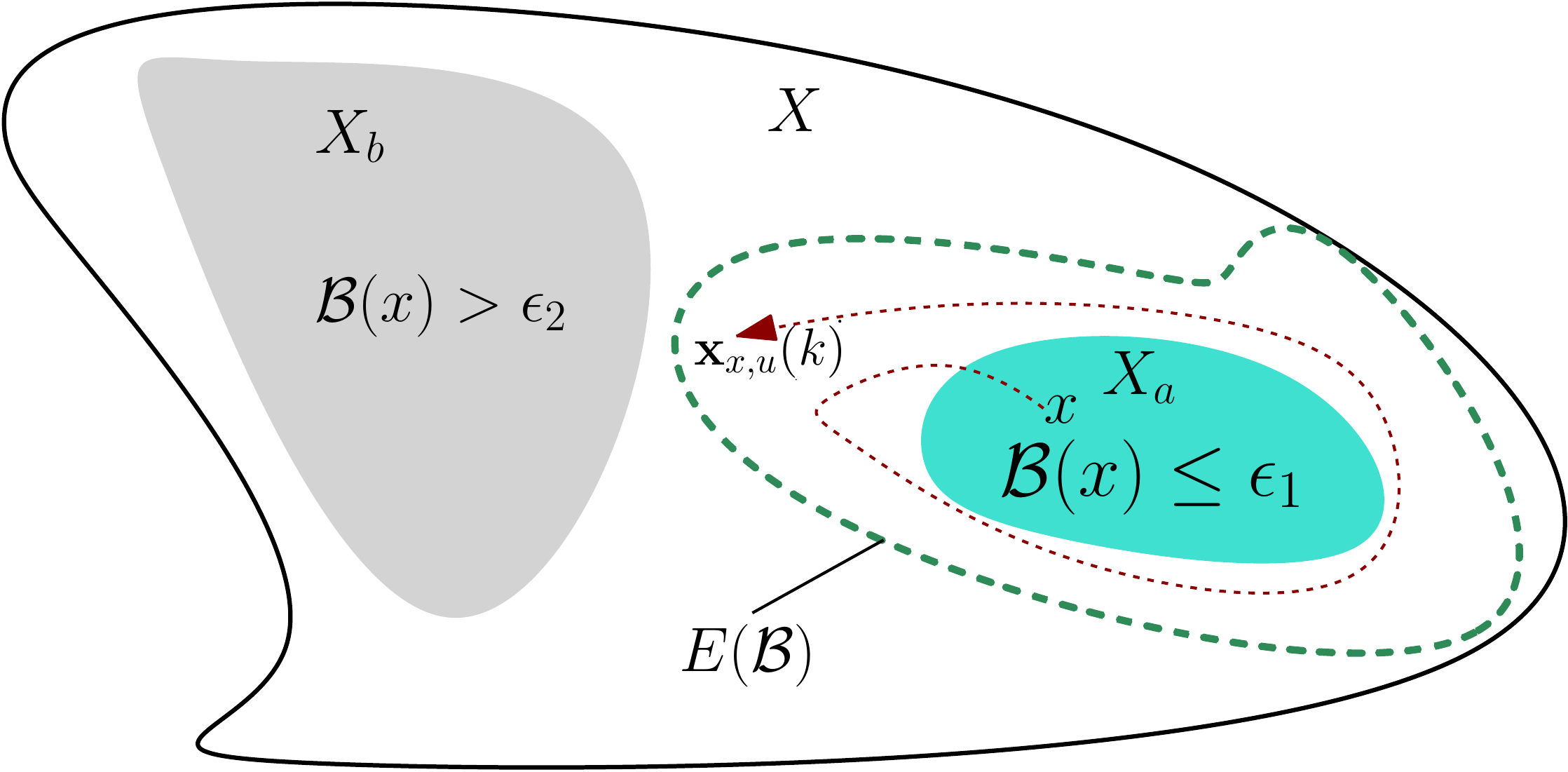}
			\caption{Illustration of a set $X$ containing sets $X_a$ and $X_{b}$: the dashed line
				illustrates the $\epsilon$-level set of $\B$, defined as $E(\B)\!\!=\!\{x\!\in \!X|\B(x)\!=\!\epsilon\}$, and the dotted curve is the run of system $\sys$.}
			\label{bf}
		\end{center}
		\vspace{-1.2em}
	\end{figure}
\end{proof}
The interpretation of Proposition \ref{thm1} is illustrated in Figure \ref{bf}. In the next section, we discuss how to translate Problem \ref{prob1} for a given specification into the computation of a collection of control barrier functions each
satisfying conditions as in Proposition \ref{thm1}.

\section{Formal Synthesis using Control Barrier Functions}
In order to synthesize control policies using control barrier functions enforcing specifications expressed by DCA $\mathcal A$, we first provide the decomposition of specifications into sequential reachability tasks which will later be solved using control barrier functions.
\subsection{Sequential Reachability Decomposition}
\label{compo_runs}
Consider a DCA $\mathcal{A}=(Q,Q_0,\Pi,\delta,F)$ expressing the properties of interest for the system $\sys$. Consider the DBA $\mathcal{A}^c=(Q,Q_0,\Pi,\delta,F)$ whose language is the complement of the language of DCA $\mathcal A$. As one can readily see, the DBA $\mathcal{A}^c$ has the same structure as the DCA $\mathcal{A}$, but with the B\"uchi accepting condition. The infinite sequence $\textbf{q}=(q_0,q_1,\ldots)\in Q^{\omega}$ is called an accepting state run if $q_0\in Q_0$ and there exists infinitely many $j \geq 0$ such that $q_j\in F $, and there exists an infinite word $\sigma = (\sigma_0,\sigma_1,\ldots)\in\Pi^\omega$ such that $q_k \overset{\sigma_k}{\longrightarrow} q_{k+1}$ for all $k\in\mathbb{N}$. For a given accepting state run $\textbf{q}$, we denote the corresponding infinite words by $\sigma(\textbf{q})\subseteq \Pi^\omega$. We also use a similar notation to denote finite words corresponding to finite state runs (i.e., $\sigma(\overline{\textbf{q}})\in\Pi^{n}$ for $\overline{\textbf{q}}\in Q^{n+1}, n\in\N$). It is known \cite[Lemma 4.39]{baier2008principles} that there exists a word $\sigma\in \Pi^\omega$ accepted by $\mathcal{A}^c$ if and only if there exists a state run of $\mathcal{A}^c$ of the form $\textbf{q}= (q_0^r,q_1^r,\ldots,q_{\mathsf m_r}^r,(q_0^s,q_1^s,\ldots,q_{\mathsf m_s}^s)^{\omega})\in Q^\omega$, where $\mathsf m_r, \mathsf m_s\in\N $ with $\mathsf m_r+ \mathsf m_s=n $, $q_0^r\in Q_0$ and $q_0^s\in F$. Let $\overline{\textbf{q}}$ be a finite state run fragment of an accepting run $\textbf{q}$ constructed by considering infinite sequence $(q_0^s,q_1^s,\ldots,q_{\mathsf m_s}^s)$ only once and is given by $\overline{\mathbf{q}}= (q_0^r,q_1^r,\ldots,q_{\mathsf m_r}^r,q_0^s,q_1^s,\ldots,q_{\mathsf m_s}^s,q_0^s)\in Q^*$. \\
Let $\mathcal{R}$ be the set of all such finite state run fragments excluding self-loops,
\begin{align}
\label{eq_compo_runs}
\mathcal{R}:= \{\overline{\textbf{q}} =& (q_0^r,q_1^r,\ldots,q_{\mathsf m_r}^r,q_0^s,q_1^s,\ldots,q_{\mathsf m_s}^s,q_0^s)\mid q_0^r\in Q_0, q_0^s \in F, q^r_i \neq  q^r_{i+1},\forall i< {\mathsf m_r},\text{ and }q^s_j \neq  q^s_{j+1},\forall j\hspace{-.2em} <\hspace{-.2em} {\mathsf m_s}\}.
\end{align}
Computation of $\mathcal{R}$ can be done algorithmically by viewing $\mathcal{A}^c$ as a directed graph $\mathcal{G}=(\mathcal{V},\mathcal{E})$ with vertices $\mathcal{V}=Q$ and edges $\mathcal{E}\subseteq\mathcal{V}\times\mathcal{V}$ such that $(q,q')\in\mathcal{E}$ if and only if $q'\neq q$ and there exist $p\in\Pi$ such that $q\overset{p}{\longrightarrow} q'$. We call a finite sequence of states $(q_0,q_1,\ldots,q_n)\in Q^n,n\in\mathbb{N}$, satisfying $(q_i,q_{i+1})\in\mathcal{E}$, for all $i\in[0;n-1]$ a path in the graph $\mathcal{G}$. For any $(q,q')\in\mathcal{E}$, we denote the atomic proposition associated with the edge $(q,q')$ by $\sigma(q,q')$. 
Now, one can easily compute $\mathcal{R}$ using variants of depth first search algorithm \cite{russell2003artificial} over $\mathcal{G}$.
For each $p \in \Pi$, we define a set $\mathcal{R}^p$ as
\begin{equation}
\label{eq_compo_runs1}
\mathcal{R}^p \hspace{-.2em}:= \hspace{-.2em}\{\overline{\textbf{q}} =(q_0^r,q_1^r,\ldots,q_{\mathsf m_r}^r,q_0^s,q_1^s,\ldots,q_{\mathsf m_s}^s,q_0^s)\hspace{-.2em}\in\hspace{-.2em}\mathcal R \hspace{-.2em}\mid\hspace{-.2em} \sigma(q^r_0,q^r_1)=p\}.
\end{equation}
Decomposition into sequential reachability is performed as follows.
For any $\overline{\textbf{q}} =(q_0,q_1,\ldots,q_{\mathsf m_r+\mathsf m_s+3})\in\mathcal{R}^p$, we define $\mathcal{P}^p(\overline{\textbf{q}})$ as a set of all state runs of length $3$,
\begin{equation}
\label{eq_compo_reach}
\mathcal{P}^p(\overline{\textbf{q}}):=\{\left(q_i,q_{i+1},q_{i+2},\right)\mid 0\leq i\leq \mathsf m_r+\mathsf m_s+1\}.
\end{equation}
We define $\mathcal{P}(\mathcal{A}^c)=\bigcup_{p\in\Pi}\bigcup_{\overline{\textbf{q}}\in\mathcal{R}^p}\mathcal{P}^p(\overline{\textbf{q}})$. 
For the better understanding, the decomposition into sequential reachability is demonstrated below with an example.
\def\example{\par\noindent{\bf Example 1.} \ignorespaces}
\def\endexaple{}

\begin{example}
	\begin{figure}
		\centering
		\includegraphics[scale=0.2]{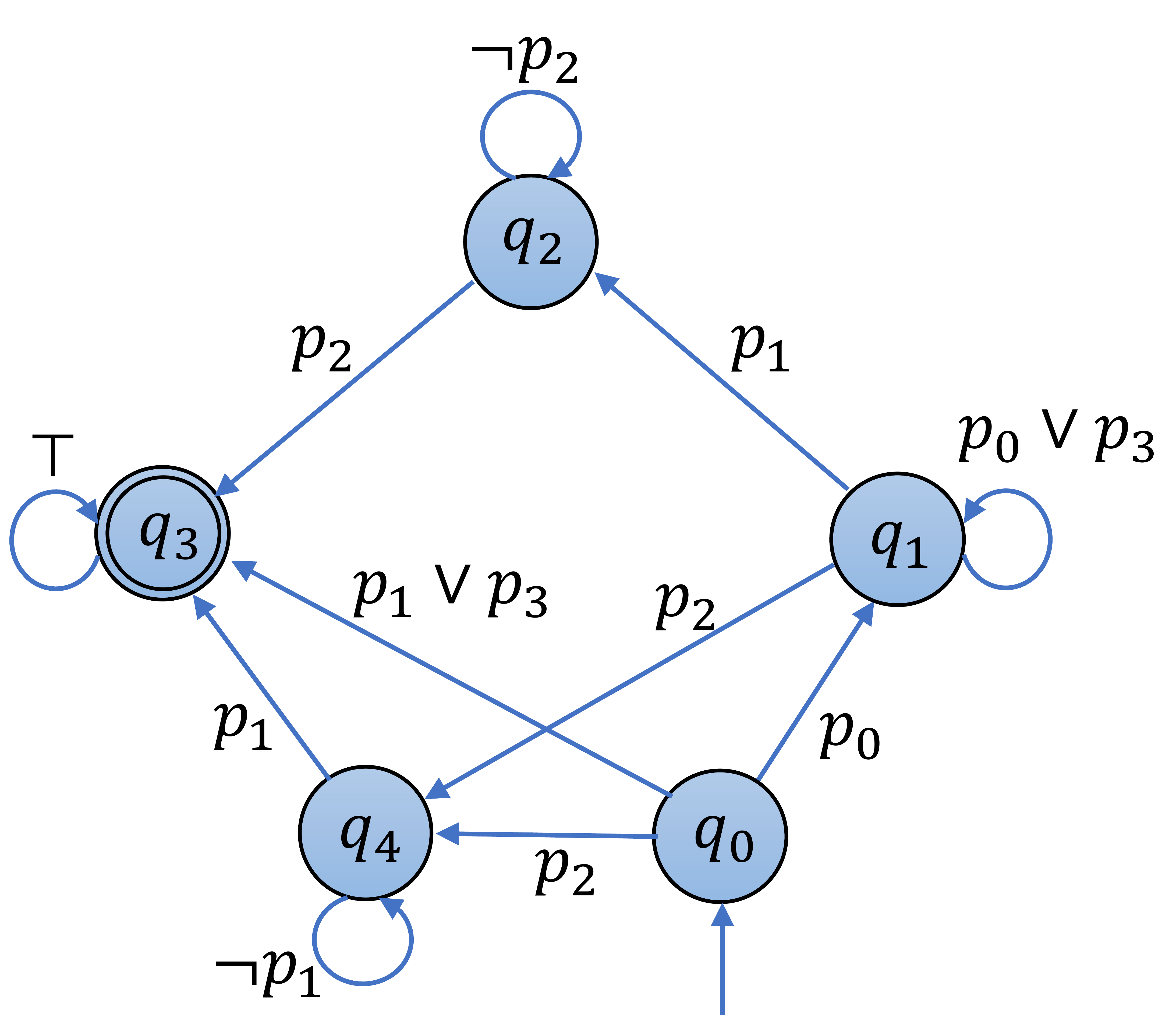}
		\caption{DBA $\mathcal{A}^c$ used in Example 1.}
		\label{fig_compo_automata}
	\end{figure}
	Consider a DBA $\mathcal{A}^c$ as shown in Figure \ref{fig_compo_automata}, where $Q_0=\{q_0\}$, $\Pi=\{p_0,p_1,p_2,p_3\}$, and $F=\{q_3\}$. 
	The set of accepting state runs is $\{(q_0,q_1^*,q_2^*,q_3^\omega), (q_0,q_1^*,q_4^*,q_3^\omega), (q_0,q_4^*,q_3^\omega), (q_0,q_3^\omega)\}$. The set of finite state run fragments $\mathcal{R}$ in \eqref{eq_compo_runs} is obtained as follow:
	\begin{equation*}
	\mathcal{R}=\{(q_0,q_1,q_2,q_3,q_3),(q_0,q_1,q_4,q_3,q_3),(q_0,q_4,q_3,q_3),(q_0,q_3,q_3)\}.
	\end{equation*}
	The sets $\mathcal{R}^p$ for $p\in\Pi$ are as follows:
	\begin{align*}
	&\mathcal{R}^{p_0}=\{(q_0,q_1,q_2,q_3,q_3),(q_0,q_1,q_4,q_3,q_3)\},\quad\mathcal{R}^{p_1}=\{(q_0,q_3,q_3)\}, \\ &\mathcal{R}^{p_2}=\{(q_0,q_4,q_3,q_3)\},\quad \mathcal{R}^{p_3}=\{(q_0,q_3,q_3)\}.
	\end{align*} 
	The sets $\mathcal{P}^p(\overline{\textbf{q}})$ for $\overline{\textbf{q}}\in\mathcal{R}^p$ are as follows:
	\begin{align*}
	&\mathcal{P}^{p_0}(q_0,q_1,q_2,q_3,q_3)=\{(q_0,q_1,q_2),(q_1,q_2,q_3),(q_2,q_3,q_3)\},\\
	&\mathcal{P}^{p_0}(q_0,q_1,q_4,q_3,q_3)=\{(q_0,q_1,q_4),(q_1,q_4,q_3),(q_4,q_3,q_3)\},\\
	&\mathcal{P}^{p_2}(q_0,q_4,q_3,q_3)=\{(q_0,q_4,q_3),(q_4,q_3,q_3)\},\\
	&\mathcal{P}^{p_1}(q_0,q_3,q_3)=\mathcal{P}^{p_3}(q_0,q_3,q_3)=\{(q_0,q_3,q_3)\}.
	\end{align*} 
	For every $\overline{\textbf{q}}\in\mathcal{R}^p$, the corresponding finite words $\sigma(\overline{\textbf{q}})$ are listed as follows:
	\begin{align*}
	& \sigma(q_0,q_3,q_3)=(p_1\vee p_3,\top), \sigma(q_0,q_1,q_2,q_3,q_3)=(p_0,p_1,p_2,\top),\\
	& \sigma(q_0,q_4,q_3,q_3)=(p_2,p_1,\top), \sigma(q_0,q_1,q_4,q_3,q_3)=(p_0,p_2,p_1,\top).
	\end{align*}
\end{example}

Having $\mathcal{P}^p(\overline{\textbf{q}})$ defined in \eqref{eq_compo_reach} as the set of state runs of length $3$, now we provide a systematic approach to compute a policy such that the state runs of $\sys$ satisfy the specification expressed by DCA $\mathcal A$.  
Given DBA $\mathcal A^c$, our approach relies on performing computation of control barrier functions for each element of $\mathcal P(\mathcal{A}^c)$, which at the end provides control policies ensuring that we never have accepting runs in the complement of the given specification (i.e., DCA $\mathcal A$). To provide the result on the construction of control policies to solve Problem \ref{prob1}, we provide the following lemma which is a direct consequence of results in Proposition \ref{thm1} and, hence, provided without a proof.
\begin{lemma}\label{compo_lemma1}
	For $p\in\Pi$ and $\overline{\textbf{q}}\in\mathcal{R}^p$, consider $(q,q',q'')\in\mathcal P^p(\overline{\textbf{q}})$. If there exists a control barrier function with stationary policy $\mathbf u(\cdot)$ satisfying conditions \eqref{bc1} and \eqref{bc2} in Proposition \ref{thm1} with $X_a=L^{-1}(\sigma(q,q'))$ and $X_{b}=L^{-1}(\sigma(q',q''))$, then the state run $\mathbf{x}_{x,\mathbf u}$ of $\sys$ starting from any initial state $x \in X_a$ under policy $\mathbf u(\cdot)$ satisfies $\mathbf{x}_{x,\mathbf u}(k)\cap L^{-1}(\sigma(q',q''))=\emptyset$~$\forall k\in\N$.
\end{lemma}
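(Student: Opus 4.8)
The plan is to derive this lemma as an immediate specialization of Proposition \ref{thm1}, since the two statements differ only in how the abstract safety sets are named. Concretely, I would instantiate the sets appearing in Proposition \ref{thm1} by taking $X_a = L^{-1}(\sigma(q,q'))$ and $X_b = L^{-1}(\sigma(q',q''))$, where $\sigma(q,q')$ and $\sigma(q',q'')$ are the atomic propositions (formulas) labeling the two consecutive edges of the length-$3$ run fragment $(q,q',q'')\in\mathcal P^p(\overline{\textbf{q}})$, and $L^{-1}(\cdot)$ is read as the set of states whose label satisfies the corresponding proposition.

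The first step is to observe that the standing hypothesis of the lemma already supplies everything Proposition \ref{thm1} requires. By assumption there exists a control barrier function $\B$ together with a stationary policy $\mathbf u(\cdot)$ and constants $\epsilon_1,\epsilon_2\in\R_{>0}$ with $\epsilon_2\geq\epsilon_1$ such that \eqref{bc1} holds on $X_a=L^{-1}(\sigma(q,q'))$ and \eqref{bc2} holds on $X_b=L^{-1}(\sigma(q',q''))$. These are verbatim the premises of Proposition \ref{thm1} for the chosen pair of sets, so no extra verification is needed: the lemma simply inherits the control-barrier inequality \eqref{bc3} and the separating-level-set conditions.

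The second step is to invoke the conclusion of Proposition \ref{thm1} directly. It guarantees that every state run $\mathbf{x}_{x,\mathbf u}$ of $\sys$ starting from $x\in X_a$ under the associated policy $\mathbf u(\cdot)$ satisfies $\mathbf{x}_{x,\mathbf u}(k)\cap X_b=\emptyset$ for all $k\in\N$. Substituting back $X_a=L^{-1}(\sigma(q,q'))$ and $X_b=L^{-1}(\sigma(q',q''))$ yields exactly the claimed invariance $\mathbf{x}_{x,\mathbf u}(k)\cap L^{-1}(\sigma(q',q''))=\emptyset$ for all $k\in\N$, finishing the argument.

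There is no substantive obstacle here; the only point deserving care is fixing the interpretation of the preimage sets, since the labeling map has the form $L:X\rightarrow\Pi$ and the edge labels may be boolean combinations of atomic propositions (as in Example~1, e.g.\ $p_1\vee p_3$). Once $L^{-1}(\sigma(q,q'))$ is understood as the satisfaction set of the edge formula, the lemma is a pure renaming of Proposition \ref{thm1}, which is precisely why the authors present it without a separate proof.
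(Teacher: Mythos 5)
Your proposal is correct and matches the paper exactly: the authors state that Lemma \ref{compo_lemma1} is a direct consequence of Proposition \ref{thm1} and give no separate proof, only the same instantiation $X_a=L^{-1}(\sigma(q,q'))$, $X_b=L^{-1}(\sigma(q',q''))$ that you use. Your added remark about interpreting $L^{-1}$ on boolean combinations of atomic propositions is a reasonable clarification but not a departure from the paper's argument.
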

Observe that for a triplet $(q,q',q'')$, $q,q',q''\in Q$, the corresponding labels in the automaton is given as $p_1=\sigma(q,q')$ and $p_2=\sigma(q',q'')$. Now by using the labeling function $L$, one gets corresponding regions $X_a=L^{-1}(p_1)$ and $X_b=L^{-1}(p_2)$. Thus, one can simply use Proposition \ref{thm1} to provide the result in Lemma \ref{compo_lemma1}.

Lemma \ref{compo_lemma1} uses control barrier functions along with appropriate choices of stationary control policies $\mathbf u(\cdot)$ for elements in $\mathcal P(\mathcal{A}^c)$ as mentioned in Proposition~\ref{thm1}. However, computation of control barrier functions and the policies for each element of $\mathcal P(\mathcal{A}^c)$ can cause ambiguity while utilizing controllers in closed-loop whenever there are more than one outgoing edges from a state of the automaton. To make this more clear, consider elements $\eta_1=(q_0,q_1,q_2)$ and $\eta_2=(q_0,q_1,q_4)$ from Example 1, where there are two outgoing transitions from state $q_1$ (see Figure \ref{fig_compo_automata}). This results in two different reachability problems, namely, reaching sets $L^{-1}(\sigma(q_1,q_2))$ and $L^{-1}(\sigma(q_1,q_4))$ starting from the same set $L^{-1}(\sigma(q_0,q_1))$. Thus computing different control barrier functions and corresponding controllers in such a scenario is not helpful. To resolve this ambiguity, we simply merge such reachability problems into one reachability problem by replacing the set $X_{b}$ in Lemma~\ref{compo_lemma1} with the union of regions corresponding to the alphabets of all outgoing edges. Thus, we get a common control barrier function and a corresponding controller. This enables us to partition $\mathcal P(\mathcal{A}^c)$ and put the elements sharing a common control barrier function and a corresponding control policy in the same partition set. These sets can be formally defined as:
\begin{align*}
\mu_{(q,q',\Delta(q'))}\hspace{-.2em}:=\hspace{-.2em}\{(q,q',q''&)\in\mathcal P(\mathcal{A}^c)\mid q,q',q''\in Q\text{ and }q''\in\Delta(q')\}.
\end{align*}
The control barrier function and the control policy corresponding to the partition set $\mu_{(q,q',\Delta(q'))}$ are denoted by $B_{\mu_{(q,q',\Delta(q'))}}(x)$ and $\mathbf u_{\mu_{(q,q',\Delta(q'))}}(x)$, respectively. Thus, for all $\eta\in\mathcal P(\mathcal{A}^c)$, we have 
\begin{align}\label{eq_controller1}
B_\eta(x)=B_{\mu_{(q,q',\Delta(q'))}}(x)\text{ and } \mathbf u_\eta(x)=\mathbf u_{\mu_{(q,q',\Delta(q'))}}(x),
\text{ if } \eta\in\mu_{(q,q',\Delta(q'))}.
\end{align}

\subsection{Control Policy}\label{aaaaa1}
From the above discussion, one can readily observe that we have different stationary control policies at different locations of the automaton which can be interpreted as a switching control policy.
Next, we define the automaton representing the switching mechanism for control policies. Consider the DBA $\mathcal{A}^c=(Q,Q_0,\Pi,\delta,F)$ as discussed in Section~\ref{compo_runs}, where $\Delta(q)$ denotes the set of all successor states of $q\in Q$. Now, the switching mechanism is given by an automata $\mathcal{A}_{\mathfrak m}=(Q_{\mathfrak m},Q_{\mathfrak m 0},\Pi_{\mathfrak m},\delta_{\mathfrak m})$, where $Q_{\mathfrak m}:=Q_{\mathfrak m 0}\cup\{(q,q',\Delta(q'))\mid q,q'\in Q\}$ is the set of states, $Q_{\mathfrak m 0}:=\{(q_0,\Delta(q_0))\mid q_0\in Q_0\}$ is the set of initial states, $\Pi_{\mathfrak m}=\Pi$, and the transition relation $(q_{\mathfrak m},\sigma,q_{\mathfrak m}')\in \delta_{\mathfrak m}$ is defined as
\begin{itemize}
	\item for all $q_{\mathfrak m}=(q_0,\Delta(q_0))\in Q_{\mathfrak m 0}$, 
	\vspace{0.1cm}
	\begin{itemize}
		\item[] \!\!$(q_0,\Delta(q_0))\!\!\overset{\sigma(q_0,q'')}{\longrightarrow}\!\!(q_0,q'',\Delta(q''))$, where $q_0\!\!\overset{\sigma(q_0,q'')}{\longrightarrow}q''$;
	\end{itemize}
	\item for all $q_{\mathfrak m}=(q,q',\Delta(q'))\in Q_{\mathfrak m}\setminus Q_{\mathfrak m 0}$,
	\vspace{0.1cm}
	\begin{itemize}
		\item[] \!\!$(q,q',\Delta(q'))\!\!\overset{\sigma(q',q'')} {\longrightarrow}\!\!(q',q'',\Delta(q''))$, such that $q,q',q''\in Q$, $q'\!\!\overset{\sigma(q',q'')}{\longrightarrow}q''$.
	\end{itemize}
\end{itemize}
The control policy that is a candidate for solving Problem~\ref{prob1} is given by 
\begin{equation}\label{eq_policy_compo}
\rho(x,q_{\mathfrak m})=\mathbf u_{\mu_{(q_{\mathfrak m}')}}(x), \quad \forall (q_{\mathfrak m},L(x),q_{\mathfrak m}')\in\delta_{\mathfrak m}.
\end{equation}
\begin{remark}
	The control policy in \eqref{eq_policy_compo} is a policy on the augmented space $X\times Q_{\mathfrak m}$. Such a policy is equivalent to a history dependent policy on the state set $X$ of the system as discussed in the last paragraph of Subsection~\ref{Interconnect_sys1} (see \cite{tkachev2013quantitative} for a proof). 
\end{remark}

\begin{example}(continued)
	Consider DBA $\mathcal A^c$ in Figure~\ref{fig_compo_automata}. Assume we have control barrier functions and corresponding control policies as given in \eqref{eq_controller1}. The automaton $\mathcal{A}_{\mathfrak m}=(Q_{\mathfrak m},Q_{\mathfrak m 0},\Pi_{\mathfrak m},\delta_{\mathfrak m})$ modeling the switching mechanism between control policies is shown in Figure~\ref{fig:switching}. 
	\begin{figure}[t] 
		\centering
		\includegraphics[scale=0.15]{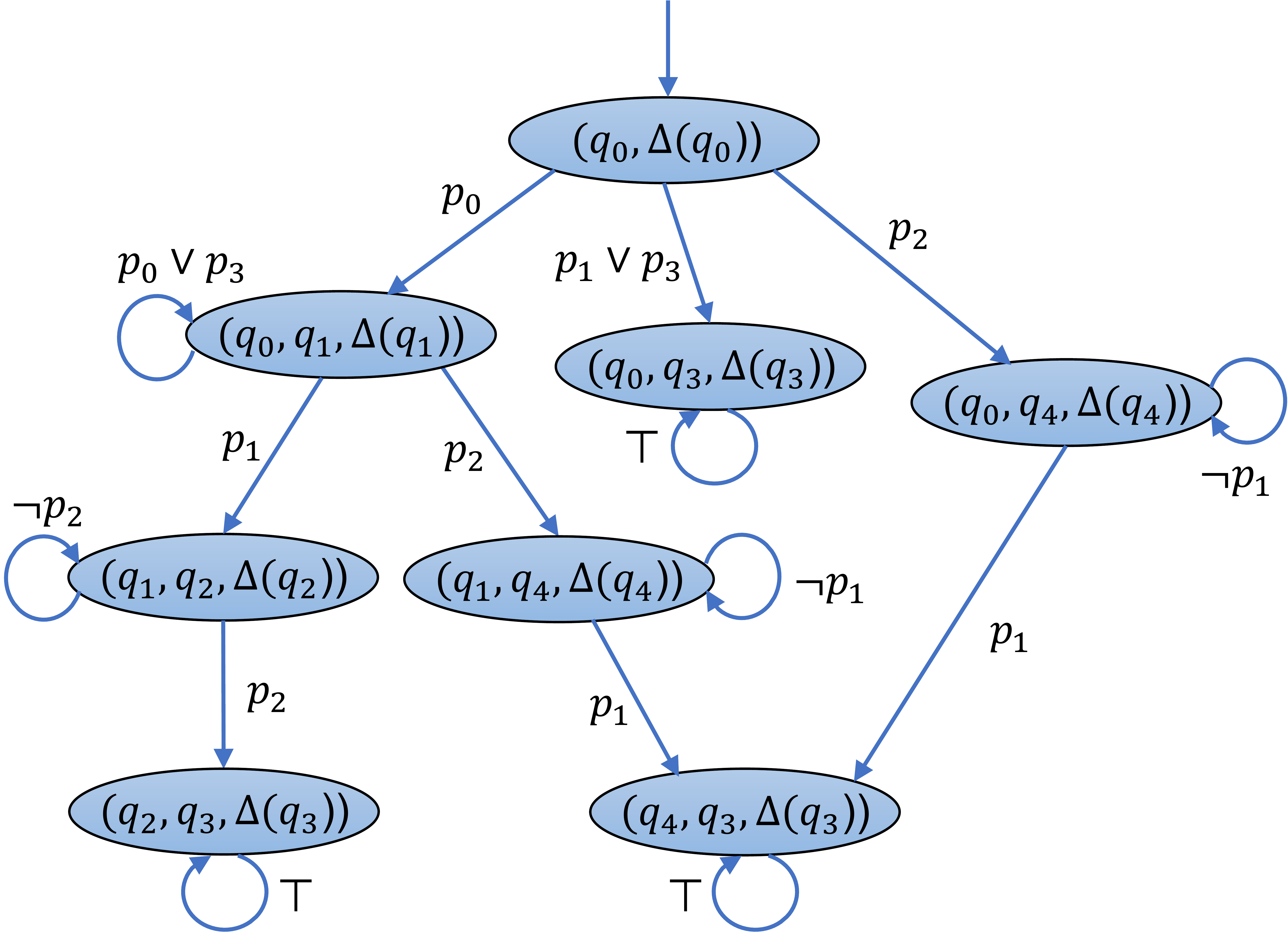} 
		\caption{Automata $\mathcal{A}_{\mathfrak m}$ representing switching mechanism for controllers.}
		\label{fig:switching}
		\vspace{-1.2em}
	\end{figure}
\end{example}

In the next theorem, we show that the policy given in \eqref{eq_policy_compo} is indeed a solution for Problem~\ref{prob1}.
\begin{theorem}\label{thm2}
	Given $p\in\Pi$, assume that there exists $(q,q',q'')\in\mathcal P^p(\overline{\textbf{q}})$, for all $\overline{\textbf{q}}\in\mathcal{R}^p$ for which we have a control barrier function and a controller as given in \eqref{eq_controller1}. Then the state run $\mathbf{x}_{x,\rho}$ of $\sys$ starting from any initial state $x \in L^{-1}(p)$ under policy $\rho$ given in \eqref{eq_policy_compo} satisfies the accepting language of DCA $\mathcal A$, i.e., $L(\mathbf{x}_{x,\rho}(k))\models\mathcal A$ for all $k\in\N$.
\end{theorem}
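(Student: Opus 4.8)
The plan is to argue by contradiction, exploiting the fact that the language of $\mathcal A$ is exactly the complement of the language of the DBA $\mathcal A^c$: to establish $L(\mathbf{x}_{x,\rho})\models\mathcal A$ it suffices to show that the (unique, since $\mathcal A^c$ is deterministic) run of $\mathcal A^c$ induced by the trace $L(\mathbf{x}_{x,\rho})$ is \emph{not} accepting, i.e. it meets $F$ only finitely often. So I would suppose the contrary, that the trace is accepted by $\mathcal A^c$, and derive a violation of the avoidance guarantee furnished by Proposition~\ref{thm1}.

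First I would invoke the sequential-reachability decomposition of Section~\ref{compo_runs}. If the induced run $\mathbf q$ of $\mathcal A^c$ visited $F$ infinitely often, then, since $Q$ is finite, $\mathbf q$ would have the lasso shape $(q_0^r,\ldots,q_{\mathsf m_r}^r,(q_0^s,\ldots,q_{\mathsf m_s}^s)^{\omega})$; collapsing repeated states yields a finite fragment lying in $\mathcal R$, and since the run starts in $Q_0$ and its first non-self-loop transition reads $\sigma_0=L(x)=p$, this fragment $\overline{\mathbf q}$ lies in $\mathcal R^p$. The hypothesis of the theorem then supplies a triple $(q,q',q'')\in\mathcal P^p(\overline{\mathbf q})$ equipped, via \eqref{eq_controller1}, with a control barrier function $B_{\mu_{(q,q',\Delta(q'))}}$ and stationary policy $\mathbf u_{\mu_{(q,q',\Delta(q'))}}$ satisfying \eqref{bc1}--\eqref{bc2} with $X_a=L^{-1}(\sigma(q,q'))$ and $X_b$ the merged union $\bigcup_{q''\in\Delta(q')}L^{-1}(\sigma(q',q''))$.

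Next I would run $\mathcal A_{\mathfrak m}$ in parallel with the closed loop and single out the time $t$ at which $\mathbf q$ performs the transition $q\overset{\sigma(q,q')}{\longrightarrow}q'$. At that instant $\mathbf x(t)\in X_a$ and, by the construction of $\delta_{\mathfrak m}$, the monitor state is $(q,q',\Delta(q'))$, so the switched policy \eqref{eq_policy_compo} coincides with $\mathbf u_{\mu_{(q,q',\Delta(q'))}}$. Applying Lemma~\ref{compo_lemma1} (equivalently Proposition~\ref{thm1}) to the sub-trajectory started at $\mathbf x(t)$ would then show the state never reaches $X_b$, hence never enters $L^{-1}(\sigma(q',q''))$, so $\mathbf q$ can never execute $q'\to q''$. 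Because $X_b$ is the union over \emph{all} outgoing edges of $q'$, the monitor $\mathcal A_{\mathfrak m}$ likewise can never leave the mode $(q,q',\Delta(q'))$, and the run is trapped at $q'$ so that $\mathbf q$ cannot complete the fragment $\overline{\mathbf q}$. This contradicts $\mathbf q$ being an accepting run realizing $\overline{\mathbf q}$, and the contradiction yields $L(\mathbf{x}_{x,\rho})\models\mathcal A$.

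The main obstacle I anticipate is precisely this reconciliation of a \emph{switched} closed loop with the \emph{stationary}-policy statement of Proposition~\ref{thm1}: I must verify that once the monitor enters the blocking mode with $\mathbf x(t)\in X_a$, that mode is genuinely absorbing under its own policy, since every exit transition is triggered by the very set $X_b$ that the policy renders unreachable, so the single-policy invariance from \eqref{bc3} applies uniformly for all subsequent times even though $\rho$ is history dependent. A secondary delicate point is the bookkeeping around self-loops: I must ensure that the collapsed fragment really lands in $\mathcal R^p$ and that being trapped at $q'$ corresponds to meeting $F$ finitely often rather than settling into an accepting self-loop, which is exactly where the exclusion of self-loops in the definition of $\mathcal R$ and in the merged $X_b$ does the work.
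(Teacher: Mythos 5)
Your proof is correct and follows essentially the same route as the paper's: every accepting lasso of $\mathcal{A}^c$ whose first transition reads $p$ collapses to a fragment in $\mathcal{R}^p$, and the hypothesized control barrier function blocks, via Lemma~\ref{compo_lemma1}, the transition $q'\to q''$ of that fragment, so the closed-loop trace cannot be accepted by $\mathcal{A}^c$ and is therefore accepted by $\mathcal{A}$. The only difference is presentational: you argue by contradiction and make explicit the synchronization between the switching monitor $\mathcal{A}_{\mathfrak m}$ and the stationary policy of Proposition~\ref{thm1} (the absorbing-mode argument), a step the paper's proof leaves implicit.
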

\begin{proof}
	Consider $p\in\Pi$ and an accepting state run $\textbf{q}\!=\!(q_0^r,q_1^r,\ldots,$ $q_{\mathsf m_r}^r,(q_0^s,q_1^s,\ldots,q_{\mathsf m_s}^s)^{\omega})\in Q^\omega$ in $\mathcal{A}^c$ with $\sigma(q_0^r,q_1^r)=p $. Let the corresponding finite state run be $\overline{\textbf{q}}\in \mathcal{R}^p$ as defined in Subsection \ref{compo_runs}. If for a triplet $(q,q',q'')\in\mathcal P^p(\overline{\textbf{q}})$ one can find a control barrier function with a stationary control policy $\mathbf u(\cdot)$, from Lemma \ref{compo_lemma1} one can conclude $\sigma(\textbf{q})\notin \mathcal{L}(\mathcal{A}^c)$. 
	Now, if there exist control barrier functions and corresponding controllers as defined in \eqref{eq_controller1} for a triplet 
	$(q,q',q'')\in \mathcal P^p(\overline{\textbf{q}})$ for any $\overline{\textbf{q}}\in\mathcal{R}^p$, one has $\sigma(\textbf{q})\notin \mathcal{L}(\mathcal{A}^c)$ for any accepting state run $\textbf{q}= (q_0^r,q_1^r,\ldots,q_{\mathsf m_r}^r$, $(q_0^s,q_1^s,\ldots,q_{\mathsf m_s}^s)^{\omega})\in Q^\omega$ satisfying $\sigma(q_0^r,q_1^r)=p $. By utilizing the definition of labeling function $L$, this implies that the state run $\mathbf{x}_{x,\rho}$ of $\sys$ starting from any initial state $x \in L^{-1}(p)$ under policy $\rho$ given in \eqref{eq_policy_compo} satisfies $L(\mathbf{x}_{x,\rho}(k))\notin \mathcal L(\mathcal A^c)$ for all $k\in\N$. Hence, we have $L(\mathbf{x}_{x,\rho}(k))\in \mathcal L(\mathcal A)$ for all $k\in\N$ and for any initial state $x \in L^{-1}(p)$. This concludes the proof.
\end{proof}
\begin{remark}
	Theorem \ref{thm2} says that in order to satisfy the given specification by the system $\sys$ starting from any initial state $x\in L^{-1}(p)$, one needs to find a control barrier function as in \eqref{eq_controller1} satisfying Lemma \ref{compo_lemma1} for at least one $(q,q',q'')\in\mathcal P^p(\overline{\textbf{q}})$ for each $\overline{\textbf{q}}\in\mathcal{R}^p$. For the rest, one can choose control inputs arbitrarily. 
\end{remark}
\begin{remark}\label{no_barrier}
	For any $(q,q',q'')\in\mu_{(q,q',\Delta(q'))}$, if $L^{-1}(\sigma(q,q'))\cap L^{-1}(\sigma(q',q''))\neq\emptyset$, there exists no control barrier function satisfying conditions in Proposition \ref{thm1}. This follows directly due to the conflict in conditions \eqref{bc1} and \eqref{bc2}. For example consider the triplet $(q_4,q_3,q_3)\in\mathcal{P}^{p_2}(q_0,q_4,q_3)$ in Example 1. There, we have $L^{-1}(p_1)\cap L^{-1}(\top)=L^{-1}(p_1)\neq\emptyset$, so there is no need to search for a control barrier function in this case since there is none.
\end{remark}

A general interpretation of this section can be summarized as follows. Intuitively, control barrier functions are used to provide a guarantee for not reaching an unsafe set starting from an initial set. When dealing with DBA, one should provide control barrier functions ensuring that trajectories are not reaching final states of the automata through all possible paths. To do so, those paths are divided into state runs of length 3 with two atomic propositions associated with it. The regions associated with those two atomic propositions can be treated as sets $X_a$ and $X_b$ in Proposition \ref{thm1} to find such control barrier functions. If we succeed in finding at least one control barrier function in all possible paths, we can provide the result on the overall reachability property (i.e. on reaching final states of DBA).

\section{Compositional Construction of Control Barrier Functions} \label{mainresult}
In this section, we provide a method for compositional construction of control barrier functions for interconnected systems $\sys$ in Definition \ref{interconnectedsystem}. Suppose we are given control subsystems $\sys_i = (X_i,U_i,W_i,f_i,Y_i,h_i)$, $i\in [1,N]$, and assume sets $X_a$ and $X_b$ introduced in Proposition \ref{thm1} can be decomposed as $X_a=\prod_{i=1}^NX_{ai}$ and $X_b=\prod_{i=1}^NX_{bi}$. Note that sets $X_a$ and $X_b$ are associated with some atomic propositions in $\Pi$ through a labeling function $L:X\rightarrow\Pi$. 
This implies that all the sets associated with atomic propositions in $\Pi$ have the decomposed structure as $X_a$ and $X_b$.
The result provided in this section is mainly used to obtain control barrier functions compositionally to satisfy the reachability tasks as given in Lemma \ref{compo_lemma1}. Here, we assume that each control subsystem $\sys_i$ admits a local control barrier function as defined next. 
\begin{definition}\label{lbc}
	Let $\sys_i = (X_i,U_i,W_i,f_i,Y_i,h_i)$ be a control subsystem, where $i\in[1;N]$. A function $ \B_i: X_i \to \R_{\geq0}$ is called a local control barrier function for $\sys_i$ if it satisfies the following conditions:
	\begin{align}
	\B_i(x_i)&\geq \alpha_i(\Vert h_i(x_i)\Vert), \quad\quad\forall x_i\in X_{i},\label{lbc0}\\
	\B_i(x_i)&\leq\overline{\epsilon}_{i}, \quad\quad\quad\quad\quad\quad\ \forall x_i\in X_{ai},\label{lbc1}\\
	\B_i(x_i)&>\underline{\epsilon}_{i},  \quad\quad\quad\quad\quad\quad\ \forall x_i\in X_{bi},\label{lbc2}
	\end{align}
	and $\forall x_i\in X_i$  $\exists$ $u_i\in U_i, \forall w_i\in W_i$ such that
	\begin{align}\label{lbc3}
	\B_i(f_i(x_i,w_i,u_i))&\leq\max\{\kappa_i(\B_i(x_i)),\gamma_{wi}(\Vert w_i\Vert)\},	
	\end{align}
	for some $\alpha_i,\kappa_i,\gamma_{wi}\in \mathcal{K}_{\infty}$ with $\kappa_i\leq \I$, and some $\underline{\epsilon}_{i},\overline{\epsilon}_{i}\in\R_{\geq0}$.
\end{definition}
Local control barrier functions of subsystems are mainly for constructing control barrier functions for the interconnected systems and they are not used directly for verifying any reachability task.
\begin{remark}
	Note that condition $\epsilon_1\leq\epsilon_2$ in Definition \ref{bc} requires implicitly that $X_a\cap X_b=\emptyset$. However, in Definition \ref{lbc} we do not require any condition between $\underline{\epsilon}_i$ and $\overline{\epsilon}_i$ because one may have $X_{ai}\cap X_{bi}\ne\emptyset$ even though $X_a\cap X_b=\emptyset$.
\end{remark}

\begin{remark}
	Note that condition \eqref{lbc3} in Definition \ref{lbc} implies that control input $u_i$ only depends on the state $x_i$ and is independent of internal input $w_i$. This allows us to design (if possible) decentralized control policies which do not require state information of other subsystems. However, if we change the sequence of quantifiers in \eqref{lbc3} to $\forall x_i\in X_i$ $\forall w_i\in W_i$ $\exists$ $u_i\in U_i$, then one obtains distributed control policies which require state informations of neighboring subsystems through internal inputs $w_i$.
\end{remark}

For functions $\kappa_i$, $\alpha_i$, and $\gamma_{wi}$ associated with $\B_i$, $\forall i\in [1;N]$, appeared in Definition \ref{lbc}, we define
\begin{align}\label{gammad}
\!\!\gamma_{ij}\!\Let\!\left\{
\begin{array}{lr}
\!\!\kappa_{i}\quad\quad\quad\quad\, \text{if}\quad i=j,\\
\!\!\gamma_{wi}\circ\alpha_{j}^{-1}\quad\, \text{if} \quad i\neq j, 
\end{array}\right.\forall i,j \in [1;N].
\end{align}
In order to establish the main compositionality results of the paper, we raise the following small-gain type assumption.
\begin{assumption}\label{sg}
	Assume that functions $\gamma_{ij}$ defined in \eqref{gammad} satisfy
	\begin{align}\label{SGC}
	\gamma_{i_1i_2}\circ\gamma_{i_2i_3}\circ\cdots\circ\gamma_{i_{r-1}i_r}\circ\gamma_{i_ri_1}<\mathcal{I}_d,
	\end{align}
	$\forall(i_1,\ldots,i_r)\in\{1,\ldots,N\}^r$, where $r\in \{1,\ldots,N\}$.
	
	Note that by using Theorem 5.2 in \cite{090746483}, the small-gain condition \eqref{SGC} implies that there exist $\varphi_i \in \mathcal{K}_{\infty}$, $\forall i\in [1;N]$, satisfying 
	\begin{align}\label{gam}
	&\max\limits_{j\in [1;N]}\{\varphi^{-1}_i\circ\gamma_{ij}\circ\varphi_j\}<\mathcal{I}_d.
	\end{align}
\end{assumption}
The next theorem provides a compositionality approach to compute a control barrier function for interconnected system $\sys$ in Definition \ref{interconnectedsystem} via local control barrier functions of subsystems $\sys_i$. 
\begin{theorem}\label{thm:3}
	Consider the interconnected control system
	$\sys=\mathcal{I}(\sys_1,\ldots,\sys_N)$ induced by
	$N\in\N_{\ge1}$
	control subsystems~$\sys_i$. Assume that each $\sys_i$ admits a local control barrier function $\B_i$ as defined in Definition \ref{lbc}. Let Assumption \ref{sg} hold and $\max\limits_{i\in[1;N]}\!\{ \varphi^{-1}_{i}(\overline{\epsilon}_{i}) \}\leq\max\limits_{i\in[1;N]}\!\{ \varphi^{-1}_{i}(\underline{\epsilon}_i) \}$. 
	Then, function $ \B:X \to \R_{\geq0}$ defined as
	\begin{align}\notag
	\B(x)\Let\max\limits_{i\in[1;N]}\{ \varphi^{-1}_{i}\circ \B_i(x_i) \}, 
	\end{align}
	is a control barrier function for the interconnected control system $\sys$ satisfying conditions \eqref{bc1} and \eqref{bc2} in Proposition \ref{thm1} with $X_a=\prod_{i=1}^NX_{ai}$ and $X_b=\prod_{i=1}^NX_{bi}$.
\end{theorem}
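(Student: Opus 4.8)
The plan is to verify separately the three properties that make $\B(x)=\max_{i\in[1;N]}\{\varphi_i^{-1}\circ\B_i(x_i)\}$ a control barrier function in the sense required: the two level-set conditions \eqref{bc1} and \eqref{bc2} of Proposition \ref{thm1}, and the decrease inequality \eqref{bc3} of Definition \ref{bc}. Throughout I would set $\epsilon_1\Let\max_{i\in[1;N]}\{\varphi_i^{-1}(\overline{\epsilon}_i)\}$ and $\epsilon_2\Let\max_{i\in[1;N]}\{\varphi_i^{-1}(\underline{\epsilon}_i)\}$, so that the required ordering $\epsilon_1\le\epsilon_2$ is exactly the standing hypothesis $\max_i\{\varphi_i^{-1}(\overline{\epsilon}_i)\}\le\max_i\{\varphi_i^{-1}(\underline{\epsilon}_i)\}$.

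The level-set conditions are the routine part. For $x\in X_a=\prod_iX_{ai}$ every component satisfies $x_i\in X_{ai}$, so \eqref{lbc1} gives $\B_i(x_i)\le\overline{\epsilon}_i$; since each $\varphi_i^{-1}$ is strictly increasing, applying $\varphi_i^{-1}$ and then taking the maximum over $i$ yields $\B(x)\le\epsilon_1$, which is \eqref{bc1}. Symmetrically, for $x\in X_b=\prod_iX_{bi}$ condition \eqref{lbc2} gives $\B_i(x_i)>\underline{\epsilon}_i$ for every $i$; evaluating at the index that attains $\epsilon_2$ shows $\B(x)\ge\varphi_i^{-1}(\B_i(x_i))>\varphi_i^{-1}(\underline{\epsilon}_i)=\epsilon_2$, which is \eqref{bc2}.

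The substantive step is the decrease inequality. Fixing $x\in X$, for each $i$ I would pick the input $u_i\in U_i$ supplied by the local control barrier function, so that \eqref{lbc3} holds for every internal input, and assemble $u=[u_1;\dots;u_N]$. In the closed interconnection the internal input of subsystem $i$ is $w_{ij}=y_{ji}=h_{ji}(x_j)$, so with the infinity norm $\norm{w_i}=\max_{j\ne i}\norm{h_{ji}(x_j)}\le\max_{j\ne i}\norm{h_j(x_j)}$, and \eqref{lbc0} gives $\norm{h_j(x_j)}\le\alpha_j^{-1}(\B_j(x_j))$. Because $\gamma_{wi}$ is increasing it commutes with the maximum, so $\gamma_{wi}(\norm{w_i})\le\max_{j\ne i}\gamma_{ij}(\B_j(x_j))$ by the definition \eqref{gammad} of $\gamma_{ij}$; combined with $\kappa_i=\gamma_{ii}$, this collapses the right-hand side of \eqref{lbc3} to $\B_i(f_i(\cdot))\le\max_{j\in[1;N]}\gamma_{ij}(\B_j(x_j))$.

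To close the argument I would substitute $s_j\Let\varphi_j^{-1}(\B_j(x_j))$, apply $\varphi_i^{-1}$ to the previous bound, and use that $\varphi_i^{-1}$ commutes with the maximum to obtain $\varphi_i^{-1}(\B_i(f_i(\cdot)))\le\max_j(\varphi_i^{-1}\circ\gamma_{ij}\circ\varphi_j)(s_j)$. Writing $s^*\Let\max_k s_k=\B(x)$ and using monotonicity to replace each $s_j$ by $s^*$, the small-gain consequence \eqref{gam} gives $(\varphi_i^{-1}\circ\gamma_{ij}\circ\varphi_j)(s^*)<s^*$ for $s^*>0$. Defining $\kappa\Let\max_{i,j\in[1;N]}\varphi_i^{-1}\circ\gamma_{ij}\circ\varphi_j$, a finite maximum of $\mathcal{K}_{\infty}$ functions and hence itself in $\mathcal{K}_{\infty}$ with $\kappa<\I$, I get $\varphi_i^{-1}(\B_i(f_i(\cdot)))\le\kappa(\B(x))$ uniformly in $i$; taking the maximum over $i$ yields $\B(f(x,u))\le\kappa(\B(x))$, establishing \eqref{bc3}. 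The main obstacle lies entirely in this last block: exposing the small-gain structure through the change of variables $s_j=\varphi_j^{-1}(\B_j(x_j))$ and carefully justifying the two monotone interchanges (pulling $\gamma_{wi}$ and $\varphi_i^{-1}$ through the maxima, and bounding each $s_j$ by $s^*$), so that \eqref{gam} can be invoked to guarantee the strict bound $\kappa<\I$ rather than merely $\kappa\le\I$.
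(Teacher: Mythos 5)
Your proposal is correct and follows essentially the same route as the paper's proof: the same choice $\kappa=\max_{i,j}\{\varphi_i^{-1}\circ\gamma_{ij}\circ\varphi_j\}$, the same chain of bounds through $\norm{w_i}=\max_{j\neq i}\norm{h_{ji}(x_j)}\le\alpha_j^{-1}(\B_j(x_j))$ to collapse \eqref{lbc3} into $\max_j\gamma_{ij}(\B_j(x_j))$, and the same constants $\epsilon_1=\max_i\{\varphi_i^{-1}(\overline{\epsilon}_i)\}$, $\epsilon_2=\max_i\{\varphi_i^{-1}(\underline{\epsilon}_i)\}$ for the level-set conditions. Your handling of the lower bound (evaluating at the index attaining $\epsilon_2$) is in fact slightly more careful than the paper's one-line version.
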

\begin{proof}
	First, let $\kappa=\max\limits_{i,j\in [1,N]}\{\varphi^{-1}_i\circ\gamma_{ij}\circ\varphi_j\}$. It follows from \eqref{gam} that $\kappa<\mathcal{I}_d$.  
	
	Now $\forall x= [x_1;\dots;x_N]\in \prod_{i=1}^N X_{i}=X$  $\exists u= [u_1;\dots;u_N]\in \prod_{i=1}^N U_{i}=U$ such that one gets the following chain of inequalities 
	\begin{align}\notag
	\B(f(x,u))=&\max\limits_{i}\{\varphi^{-1}_{i}\circ \B_i(f_i(x_i,u_i,w_i))\}\\\notag
	\leq& \max\limits_{i}\Big\{\varphi^{-1}_{i}\big(\max\{\kappa_i( \B_i(x_i)),\gamma_{wi}(\Vert w_i\Vert )\}\big)\Big\}\\\notag
	=& \max\limits_{i}\Big\{\varphi^{-1}_{i}\big(\max\{\kappa_i( \B_i(x_i)),\gamma_{wi}(\max\limits_{j,j\neq i}\{\Vert w_{ij}\Vert \})\}\big)\Big\}\\\notag
	=& \max\limits_{i}\Big\{\varphi^{-1}_{i}\big(\max\{\kappa_i( \B_i(x_i)),\gamma_{wi}(\max\limits_{j,j\neq i}\{\Vert y_{ji}\Vert \})\}\big)\Big\}\\\notag
	=&\max\limits_{i}\Big\{\varphi^{-1}_{i}\big(\max\{\kappa_i( \B_i(x_i)),\gamma_{wi}(\max\limits_{j,j\neq i}\{\Vert h_{ji}(x_j)\Vert\})\}\big)\Big\}\\\notag
	\leq&\max\limits_{i}\Big\{\varphi^{-1}_{i}\big(\max\{\kappa_i( \B_i(x_i)),\gamma_{wi}(\max\limits_{j,j\neq i}\{\Vert h_{j}(x_j)\Vert\})\}\big)\Big\}\\\notag
	\leq&\max\limits_{i}\Big\{\varphi^{-1}_{i}\big( \max\{\kappa_i( \B_i(x_i)),\gamma_{wi}(\max\limits_{j,j\neq i}\{\alpha^{-1}_{j}\circ \B_j(x_j)\}\big)\Big\}\\\notag
	\leq&\max\limits_{i,j}\Big\{\varphi^{-1}_{i}\circ\gamma_{ij}\circ \B_j(x_j)\Big\}\\\notag
	=&\max\limits_{i,j}\Big\{\varphi^{-1}_{i}\circ\gamma_{ij}\circ\varphi_{j}\circ\varphi^{-1}_{j}\circ \B_j(x_j)\Big\}\\\notag
	\leq&\max\limits_{i,j,l}\Big\{\varphi^{-1}_{i}\circ\gamma_{ij}\circ\varphi_{j}\circ\varphi^{-1}_{l}\circ \B_l(x_l)\Big\}\\\notag
	=&\max\limits_{i,j}\Big\{\varphi^{-1}_{i}\circ\gamma_{ij}\circ \varphi_{j}\circ \B(x)\Big\}=\kappa( \B(x)),\notag
	\end{align}
	satisfying condition \eqref{bc3}. \\
	Now, we show that conditions \eqref{bc1} and \eqref{bc2} hold. From conditions \eqref{lbc1} and \eqref{lbc2}, $\forall x= [x_1;\dots;x_N]\in\prod_{i=1}^N X_{ai}=X_a$, one has
	\begin{align*}
	\B(x)&\!=\!\!\max\limits_{i\in[1;N]}\!\{\varphi^{-1}_{i}\!\!\circ \!\B_i(x_i) \}\!\leq\!\!\!\max\limits_{i\in[1;N]}\!\{ \varphi^{-1}_{i}(\overline{\epsilon}_{i}) \},
	\end{align*}
	and $\forall x= [x_1;\dots;x_N]\in \prod_{i=1}^N X_{bi}=X_b$ 
	\begin{align*}
	\B(x)&\!=\!\!\max\limits_{i\in[1;N]}\!\{\varphi^{-1}_{i}\!\!\circ \!\B_i(x_i) \}\!>\!\max\limits_{i\in[1;N]}\!\{ \varphi^{-1}_{i}(\underline{\epsilon}_{i}) \},
	\end{align*}
	satisfying conditions \eqref{bc1} and \eqref{bc2} with $$\epsilon_1=\max\limits_{i\in[1;N]}\{ \varphi^{-1}_{i}(\overline{\epsilon}_i) \},\epsilon_2=\max\limits_{i\in[1;N]}\{ \varphi^{-1}_{i}(\underline{\epsilon}_i) \}.$$ 
	This concludes the proof.
\end{proof}

Now, we provide a discussion about the feasibility of inequality 
\begin{align}\label{ineq}
\max\limits_{i\in[1;N]}\!\{ \varphi^{-1}_{i}(\overline{\epsilon}_{i}) \}\leq\max\limits_{i\in[1;N]}\!\{ \varphi^{-1}_{i}(\underline{\epsilon}_i) \},
\end{align}
required in Theorem \ref{thm:3}. In general,  inequality \eqref{ineq} is not very restrictive. Indeed, functions $\varphi_{i}$ in \eqref{gam} play the role of rescaling the barrier functions of the individual subsystems while normalizing the effect of internal gains of other subsystems (see \cite{090746483} for a similar discussion in the context of Lyapunov stability). Due to this scaling, one can expect that such an inequality holds in many applications.

In the case that $X_{ai}\cap X_{bi}=\emptyset$, $\forall i\in[1;N]$, inequality \eqref{ineq} always holds with $\max\limits_{i\in[1;N]}\!\{\overline{\epsilon}_i \}\leq\min\limits_{i\in[1;N]}\!\{\underline{\epsilon}_i \}$. Note that we can always impose such a condition over $\underline{\epsilon}_i$ and $\overline{\epsilon}_i$ whenever $X_{ai}\cap X_{bi}=\emptyset,\forall i\in[1;N]$.
~In the case where $\varphi_{i}=\varphi_{j},\forall i,j\in[1;N]$, inequality \eqref{ineq} simply reduces to $\max\limits_{i\in[1;N]}\!\{\overline{\epsilon}_i \}\leq\max\limits_{i\in[1;N]}\!\{\underline{\epsilon}_i \}$.

\begin{remark}
	In the context of stability analysis of interconnected nonlinear control systems, condition \eqref{SGC} is commonly used to show different stability proprieties (e.g., uniform asymptotic stability or input-to-state stability) for the entire network by investigating stability criteria for subsystems. Moreover, condition \eqref{SGC} is also been shown to be tight and cannot be weakened in the context of stability verification of interconnected systems. We refer interested readers to \cite{Dashkovskiy2007} for more details on the tightness analysis of  small-gain condition \eqref{SGC}.
\end{remark}

\begin{remark}
	Here, we provide a general guideline on the computation of $\mathcal{K}_{\infty}$ functions $\varphi_i, i\in[1;N]$ as follows: $(i)$ In the case of having $N\ge 1$ subsystems, functions $\varphi_i, i\in[1;N]$, can be constructed numerically using the algorithm proposed in \cite{Eaves} and the technique provided in \cite[Proposition 8.8]{090746483}, see \cite[Chapter 4]{Rufferp}; $(ii)$ Simple construction techniques are provided in \cite{JIANG} and \cite[Section 9]{090746483} for the case of two and three subsystems, respectively; $(iii)$ the $\mathcal{K}_{\infty}$ functions $\varphi_i, i\in[1;N]$, can be always chosen as identity functions provided that $\gamma_{ij}<\mathcal{I}_d$, $\forall~ i,j\in [1;N]$, for functions $\gamma_{ij}$ appeared in \eqref{gammad}.
\end{remark}

\subsection{Computation of Local Control Barrier Functions}
Proving the existence of a control barrier function and finding one are in general hard problems. However, under some assumptions over systems dynamics, control inputs, and labeling functions, one can search for a local control barrier functions and corresponding control policies of specific forms. In this subsection, we provide two potential solutions: one using sum-of-squares (SOS) program and the other one using counterexample guided inductive synthesis (CEGIS). 
\subsubsection{Sum-of-squares program}\label{SOSS}
In order to formulate conditions in Definition~\ref{lbc} as an SOS optimization to search for a polynomial local control barrier function $\B_i(\cdot)$ and a polynomial stationary control policy $\mathbf u_i(\cdot)$, we raise the following assumption.
\begin{assumption}
	\label{ass:BC}
	Subsystem $\sys_i$ has a continuous state set $X_i\subseteq \mathbb R^{n_i}$, a continuous external input set $U_i\subseteq \R^{m_i}$, and a continuous internal input set $W_i\subseteq \R^{p_i}$. Its transition function $f_i:X_i\times U_i\times W_i \rightarrow X_i$ is polynomial in variables $x_i$, $u_i$, and $w_i$.
\end{assumption}
The following lemma provides a set of sufficient conditions for the existence of local control barrier functions required in Theorem~\ref{thm:3}, which can be solved as an SOS optimization.
\begin{lemma}
	\label{sos}
	Suppose Assumption~\ref{ass:BC} holds and sets $X_{ai},X_{bi}, X_i$ can be defined as $X_{ai}=\{x_i\in\R^{n_i}\mid g_{ai}(x_i)\geq0\}$, $X_{bi}=\{x_i\in\R^{n_i}\mid g_{bi}(x_i)\geq0\}$, $X_i=\{x_i\in\R^{n_i}\mid g_i(x_i)\geq0\}$, and $W_i=\{w_i\in\R^{p_i}\mid g_{wi}(w_i)\geq0\}$, where the inequalities are defined element-wise and $g_{ai},g_{bi},g_{i},g_{wi}$ are vectors of polynomial functions. 
	Suppose there exists a sum-of-squares polynomial $\B_i(x_i)$, polynomials $\lambda_{u_{ji}}(x_i)$ corresponding to the $j^{\text{th}}$ input in $u_i=(u_{1i},u_{2i},\ldots,u_{m_i i})\in U_i\subseteq \R^{m_i}$, and vectors of sum-of-squares polynomials $\lambda_{ai}(x_i)$, $\lambda_{bi}(x_i)$, $\lambda_i(x_i)$, $\overline\lambda_i(x_i)$, $\lambda_{wi}(w_i)$ of appropriate size, and $\hat{\alpha}_i,\hat{\kappa}_i,\hat{\gamma}_{wi}\in \mathcal{K}_{\infty}$ with $\hat{\kappa}_i\leq \I$ such that following expressions are sum-of-squares polynomials:
	\begin{align}
	&\B_i(x_i)-\hat{\alpha}_i(\Vert h_i(x_i)\Vert)-\lambda_i^T(x_i)g_i(x_i),\label{eq:sos1}\\
	&-\B_i(x_i)+\overline{\epsilon}_i-\lambda_{ai}^T(x_i)g_{ai}(x_i),\label{eq:sos2}\\
	&\B_i(x_i)-\underline{\epsilon}_i-\lambda_{bi}^T(x_i)g_{bi}(x_i),\label{eq:sos3}\\
	&-\B_i(f_i(x_i,w_i,u_i))+\hat{\kappa}_i(\B_i(x_i))+\hat{\gamma}_{wi}(\Vert w_i\Vert)-\hspace{-0.3em}\sum_{j=1}^{m_i}(u_{ji}-\lambda_{u_{ji}}(x_i))-\overline\lambda_i^T(x_i) g_i(x_i)-\lambda_{wi}^T(w_i) g_{wi}(w_i),\label{eq:sos4}
	\end{align}
	where $\underline{\epsilon}_{i},\overline{\epsilon}_{i}$ are the constants introduced in Definition \ref{lbc}. Then $\B_i(x_i)$ satisfies conditions \eqref{lbc0}-\eqref{lbc3} in Definition \ref{lbc} and $u_i=[\lambda_{u_{1i}}(x_i);\ldots,\\\lambda_{u_{m_ii}}(x_i)]$, $i\in[1,N]$, is the corresponding controller.
\end{lemma}   
\begin{proof}
	Following a similar argument as the one in the proof of Lemma 5.6 in \cite{jagtap2019formal}, conditions \eqref{eq:sos1}-\eqref{eq:sos4} imply
		\begin{align}
		\B_i(x_i)&\geq \hat{\alpha}_i(\Vert h_i(x_i)\Vert), \quad\quad\forall x_i\in X_{i},\label{lbcs0}\\
		\B_i(x_i)&\leq\overline{\epsilon}_{i}, \quad\quad\quad\quad\quad\quad\ \forall x_i\in X_{ai},\label{lbcs1}\\
		\B_i(x_i)&>\underline{\epsilon}_{i},  \quad\quad\quad\quad\quad\quad\ \forall x_i\in X_{bi},\label{lbcs2}
		\end{align}
		and $\forall x_i\in X_i$  $\exists$ $u_i\in U_i, \forall w_i\in W_i$ such that
		\begin{align}\label{lbcs3}
		\B_i(f_i(x_i,w_i,u_i))&\leq\hat{\kappa}_i(\B_i(x_i))+\hat{\gamma}_{wi}(\Vert w_i\Vert).	
		\end{align}
		By using Theorem 1 in \cite{arxiv}, condition \eqref{lbcs3} can be written as 
		\begin{align*}
		\B_i(f_i(x_i,w_i,u_i))&\leq\max\{\kappa_i(\B_i(x_i)),\gamma_{wi}(\Vert w_i\Vert)\},	
		\end{align*}
		where $\kappa_i=\mathcal{I}_d-(\mathcal{I}_d-\psi_i)\circ(\mathcal{I}_d-\hat{\kappa}_i)$, $\gamma_{wi}=(\mathcal{I}_d-\hat{\kappa}_i)^{-1}\circ\psi_i^{-1}\circ\hat{\gamma}_{wi}$,
		with $\psi_i\in\mathcal{K}_{\infty}$ chosen arbitrarily such that $\psi_i<\mathcal{I}_d$. Let $\alpha_i=\hat{\alpha}_i$ and this concludes the proof.
\end{proof} 
\begin{remark}
	Note that function $\hat\kappa_i(\cdot)$ in \eqref{eq:sos4} can cause nonlinearity on the unknown parameters of $\mathcal{B}_i$. A possible way to avoid this is to consider a linear function $\hat\kappa_i(r)=c_ir, \forall r\geq0 $, with some constant $0<c_i<1$. Then one can use bisection method to minimize the value of $c_i$.
\end{remark}
One can utilize existing tools such as \texttt{SOSTOOL} \cite{sostools} in conjunction with a semidefinite programming solver such as \texttt{SeDuMi} \cite{sturm1999using} to compute a sum-of-squares polynomial $\B_i(x_i)$ satisfying \eqref{eq:sos1}-\eqref{eq:sos4}.  

\subsubsection{Counter-example guided synthesis approach}\label{cegis}
This approach uses feasibility solvers for finding local control barrier functions of a given parametric form using Satisfiability Modulo Theories (SMT) solvers such as \texttt{Z3} \cite{z3}, \texttt{MathSAT} \cite{cimatti2013mathsat5}, or \texttt{dReal} \cite{Gao.2013}. In order to use the CEGIS framework, we raise the following assumption.
\begin{assumption}\label{assum1}
	Each control subsystem $\sys_i$, $i \in [1;N]$, has compact state set $X_i$, compact internal input set $W_i$, and a finite input set $U_i$.
\end{assumption}
Under Assumption \ref{assum1}, conditions \eqref{lbc0}-\eqref{lbc3} can be rephrased as a satisfiability problem which can be searched for parametric local control barrier function using the CEGIS approach. The feasibility condition that is required to be satisfied for the existence of a local control barrier function $\B_i$ is given in the following lemma.
\begin{lemma} \label{cbclem}
	Consider control subsystem $\sys_i \!=\! (X_i,U_i,W_i,f_i,Y_i,h_i)$ satisfying Assumption \ref{assum1}.
	Suppose there exists a function $\B_i(x_i)$ and $\mathcal{K}_{\infty}$ functions $\hat{\alpha}_i, \hat{\kappa}_i$ and $\hat{\gamma}_{wi}$ such that the following expression is true
	\begin{align}\notag
	&\bigwedge_{x_i \in X_i}\!\! \B_i(x_i) \!\geq\! \hat{\alpha}_i(\|h_i(x_i)\|) \bigwedge_{x_i \in X_{ai}} \!\!\B_i(x_i) \!\leq\! \overline{\epsilon}_i 
	\bigwedge_{x_i \in X_{bi}} \!\!\B_i(x_i) >\underline{\epsilon}_i \\\label{feas1}  
	&\bigwedge_{x_i \in X_i}\!\!\Big(\!\!\bigvee_{u_i\in U_i} \!\!\Big(\!\!\bigwedge_{w_i \in W_i}\!\!\!\!\big(\B_i(f_i(x_i,w_i,u_i))\leq\hat{\kappa}_i(\B_i(x_i))+\hat{\gamma}_{wi}(\Vert w_i\Vert)\big)\Big)\Big),
	\end{align}
	where $\underline{\epsilon}_{i},\overline{\epsilon}_{i}$ are the constants introduced in Definition \ref{lbc}. Then $\B_i(x_i)$ satisfies conditions \eqref{lbc0}-\eqref{lbc3} in Definition \ref{lbc}.
\end{lemma}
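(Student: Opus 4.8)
The plan is to recognize that the satisfiability formula \eqref{feas1} is nothing more than a logical transcription of conditions \eqref{lbc0}--\eqref{lbc3} written in the additive (sum) form, so that the argument reduces to exactly the same sum-to-max conversion already used in the proof of Lemma \ref{sos}. First I would read off the quantifier structure encoded by the connectives: each conjunction $\bigwedge_{x_i\in S}$ over a set $S$ expresses universal quantification over $S$, while the disjunction $\bigvee_{u_i\in U_i}$ expresses existential quantification over the external input. Since Assumption \ref{assum1} guarantees that $U_i$ is finite, this disjunction is well-defined and faithfully represents $\exists u_i\in U_i$; compactness of $X_i$ and $W_i$ is what makes the formula amenable to an SMT solver, though it plays no role in the implication itself. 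In particular, the prefix $\bigwedge_{x_i}\bigvee_{u_i}\bigwedge_{w_i}$ in the last clause yields precisely the quantifier order $\forall x_i\,\exists u_i\,\forall w_i$ demanded by \eqref{lbc3}.

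Next, setting $\alpha_i=\hat{\alpha}_i$, the first three clauses of \eqref{feas1} are literally conditions \eqref{lbc0}, \eqref{lbc1}, and \eqref{lbc2}, so these hold immediately. The fourth clause then delivers the intermediate inequality
\begin{align*}
\B_i(f_i(x_i,w_i,u_i))\leq\hat{\kappa}_i(\B_i(x_i))+\hat{\gamma}_{wi}(\Vert w_i\Vert),
\end{align*}
which is exactly the additive bound \eqref{lbcs3} that also appears in the proof of Lemma \ref{sos}.

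The only substantive step is to convert this additive bound into the required max form \eqref{lbc3}. I would invoke Theorem 1 in \cite{arxiv} verbatim as in Lemma \ref{sos}: taking $\hat{\kappa}_i\leq\mathcal{I}_d$ and choosing any $\psi_i\in\mathcal{K}_{\infty}$ with $\psi_i<\mathcal{I}_d$, define $\kappa_i=\mathcal{I}_d-(\mathcal{I}_d-\psi_i)\circ(\mathcal{I}_d-\hat{\kappa}_i)$ and $\gamma_{wi}=(\mathcal{I}_d-\hat{\kappa}_i)^{-1}\circ\psi_i^{-1}\circ\hat{\gamma}_{wi}$, which gives
\begin{align*}
\B_i(f_i(x_i,w_i,u_i))\leq\max\{\kappa_i(\B_i(x_i)),\gamma_{wi}(\Vert w_i\Vert)\}
\end{align*}
with $\kappa_i\leq\mathcal{I}_d$. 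Together with $\alpha_i=\hat{\alpha}_i$, this establishes \eqref{lbc0}--\eqref{lbc3} and completes the argument.

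Because \eqref{feas1} is a direct encoding of the defining conditions, there is no genuine mathematical obstacle here; the statement is essentially the finite/compact-domain counterpart of Lemma \ref{sos}, with the SMT formula replacing the SOS certificates. The one point I would treat carefully is confirming that the logical connectives reproduce the intended quantifier prefix---specifically that the disjunction over the \emph{finite} external input set $U_i$ correctly captures the existential input selection, which is precisely the reason Assumption \ref{assum1} insists on finiteness of $U_i$.
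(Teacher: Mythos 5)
Your proposal is correct and follows essentially the same route as the paper: the paper's justification is the one-line remark that \eqref{feas1} implies the additive conditions \eqref{lbcs0}--\eqref{lbcs3}, which in turn imply \eqref{lbc0}--\eqref{lbc3} via the same sum-to-max conversion (Theorem 1 of \cite{arxiv}) used in the proof of Lemma \ref{sos}. Your explicit reading of the connectives as the quantifier prefix $\forall x_i\,\exists u_i\,\forall w_i$ and your note that the conversion needs $\hat{\kappa}_i\leq\mathcal{I}_d$ are both consistent with, and slightly more careful than, the paper's treatment.
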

Note that condition \eqref{feas1} implies conditions \eqref{lbcs0}-\eqref{lbcs3} which imply \eqref{lbc0}-\eqref{lbc3}.
One can utilize the CEGIS approach to search for parametric barrier functions solving the feasibility problem in \eqref{feas1}. For the detailed discussion on CEGIS approach, we kindly refer interested readers to \cite[Subsection 5.3.2]{jagtap2019formal}.

\section{Case Studies}
\subsection{Room Temperature Control} 
The evolution of the temperature $\mathbf{T}$ of $N$ rooms are described by the interconnected discrete-time model:
\begin{align*}
\sys:
\mathbf{T}(k+1)=A\mathbf{T}(k)+ \alpha_e T_{E}+\alpha_h T_{h}\nu(k),
\end{align*}
where $A\in\R^{N\times N}$ is a matrix with elements $\{A\}_{ii}=(1-2\alpha-\alpha_e-\alpha_h \nu_{i}(k))$,
$\{A\}_{i(i+1)}=\{A\}_{(i+1)i}=\{A\}_{1N}=\{A\}_{N1}=\alpha$, $\forall i\in [1;N-1]$, and all other elements are identically zero, $\mathbf{T}(k)=[\mathbf{T}_1(k);\ldots;\mathbf{T}_N(k)]$,  $\nu(k)=[\nu_1(k);\ldots;\nu_N(k)]$, $T_E=[T_{e1};\ldots;T_{eN}]$, where $\nu_i(k)\in[0,1]$ for all $i\in[1;N]$ represents ratio of the heater valve being open. The other parameters are as follow: $\forall i\in[1;N]$, $T_{ei}=15\,^\circ C$ is the external temperature and $T_h\!=\!55\,^\circ C$ is the heater temperature. Parameters $\alpha\!=\!5\times10^{-2}$, $\alpha_e\!=\!8\times10^{-3}$, and $\alpha_h\!=\!3.6\times10^{-3}$ are heat exchange coefficients. All the parameters are adopted from \cite{jagtap2017quest}.\\
The state set of the system is $T\subseteq \R^N$. We consider regions of interest $X_0=[20.5, 22.5]^N$,  $X_1=[0,20]^N$, $X_2=[23,45]^N$, and $X_3=T\setminus(X_0\cup X_1\cup X_2)$. The set of atomic propositions is given by $\Pi=\{p_0,p_1,p_2,p_3\}$ with labeling function $L(x_j)=p_j$ for all $x_j\in X_j$, $j\in\{0,1,2,3\}$. The objective is to compute a control policy ensuring satisfaction of the specification given by the accepting language of the DCA $\mathcal{A}$ in Figure \ref{fig:temp_automata}. In English, language of $\mathcal A$ entails that if we start in $X_0$ it will always stay away from $X_1$ or $X_2$. Note that, the corresponding DBA $\mathcal{A}^c$ accepting complement of $\mathcal{L}(\mathcal{A})$ has exactly the same structure as in Figure \ref{fig:temp_automata}, but with the B{\"u}chi accepting condition. One can readily see that, we have sets $\mathcal{P}^{p_0}=\{(q_0,q_1,q_2),(q_1,q_2,q_2)\}$ and $\mathcal{P}^{p_1}=\mathcal{P}^{p_2}=\mathcal{P}^{p_3}=\{(q_0,q_2,q_2)\}$. Following Remark \ref{no_barrier}, we only need to compute a control barrier function corresponding to triplet $(q_0,q_1,q_2)$.

In order to apply our compositionality result, we need to decompose the system $\sys$ into subsystems $\sys_i$, $i\in[1;N]$.
Accordingly, by introducing $\sys_i$ described by
\begin{align*}
\sys_i:\left\{
\begin{array}{rl}
\mathbf{T}_i(k+1)\!\!\!\!&=a\mathbf{T}_i(k)+d\omega_i(k)+\alpha_e T_{ei} +\alpha_h T_h \nu_i(k),\\
\mathbf{y}_{i}(k)\!\!\!\!&=\mathbf{T}_i(k),%
\end{array}\right.
\end{align*}
one can readily verify that $\sys=\mathcal{I}(\sys_1,\ldots,\sys_N)$, where $a=1-2\alpha-\alpha_e-\alpha_h \nu_{i}(k) $, $d=[\alpha;\alpha]^T$, and $\omega_i(k)=[\mathbf{y}_{i-1}(k);\mathbf{y}_{i+1}(k)]$ (with $\mathbf{y}_{0}=\mathbf{y}_{N}$ and $\mathbf{y}_{N+1}=\mathbf{y}_{1}$).

To compute local control barrier functions, we solve sum-of-squares program using \texttt{SOSTOOLS} and \texttt{SeDuMi} as described in Subsection~\ref{SOSS}. By using Lemma~\ref{sos}, for all $i\in[1;N]$, we compute local control barrier functions of order 2 as $\B_i(x_i)=0.07456x_i^2-3.18x_i+73.79$ and the corresponding stationary control policy of order 1 as
$\mathbf u_i(x_i)=-0.002398x_i+0.5357$ with $X_{ai}=[20.5,22.5]$, $X_{bi}=[0,20]\cup[23,45]$, constants $\overline{\epsilon}_{i}=\underline\epsilon_{i}=40$, and functions $\hat{\alpha}_i(r)=1.5r$, $\hat{\kappa}_i(r)=0.65r$, and $\hat{\gamma}_{wi}(r)=0.5r$ $\forall r\in \R_{\geq0}$. One can readily verify that the small-gain assumption in \eqref{SGC} holds with $\gamma_{ij}(r)=0.95r$, $\forall r\in \R_{\geq0}$. Then by utilizing results in Theorem \ref{thm:3}, we get overall control barrier function $\B(x)\Let\max_{i\in[1;N]}\{\varphi_{i}^{-1}\circ\B_i(x_i) \} $ with $\varphi_{i}=\mathcal{I}_d$ and corresponding control policy is given by $\mathbf u(x)=[\mathbf u_1(x_1);\ldots;\mathbf u_N(x_N)]$. One can readily see that only one stationary control policy is enough for enforcing the specification, thus we do not need switching mechanism.   
Figure \ref{Compo_fig_resp1} shows the maximum and minimum of state trajectories at each time-step of the closed-loop system $\sys$ with 10000 rooms starting from an initial state in $X_0$.

\begin{figure}[t] 
	\centering
	\includegraphics[scale=0.16]{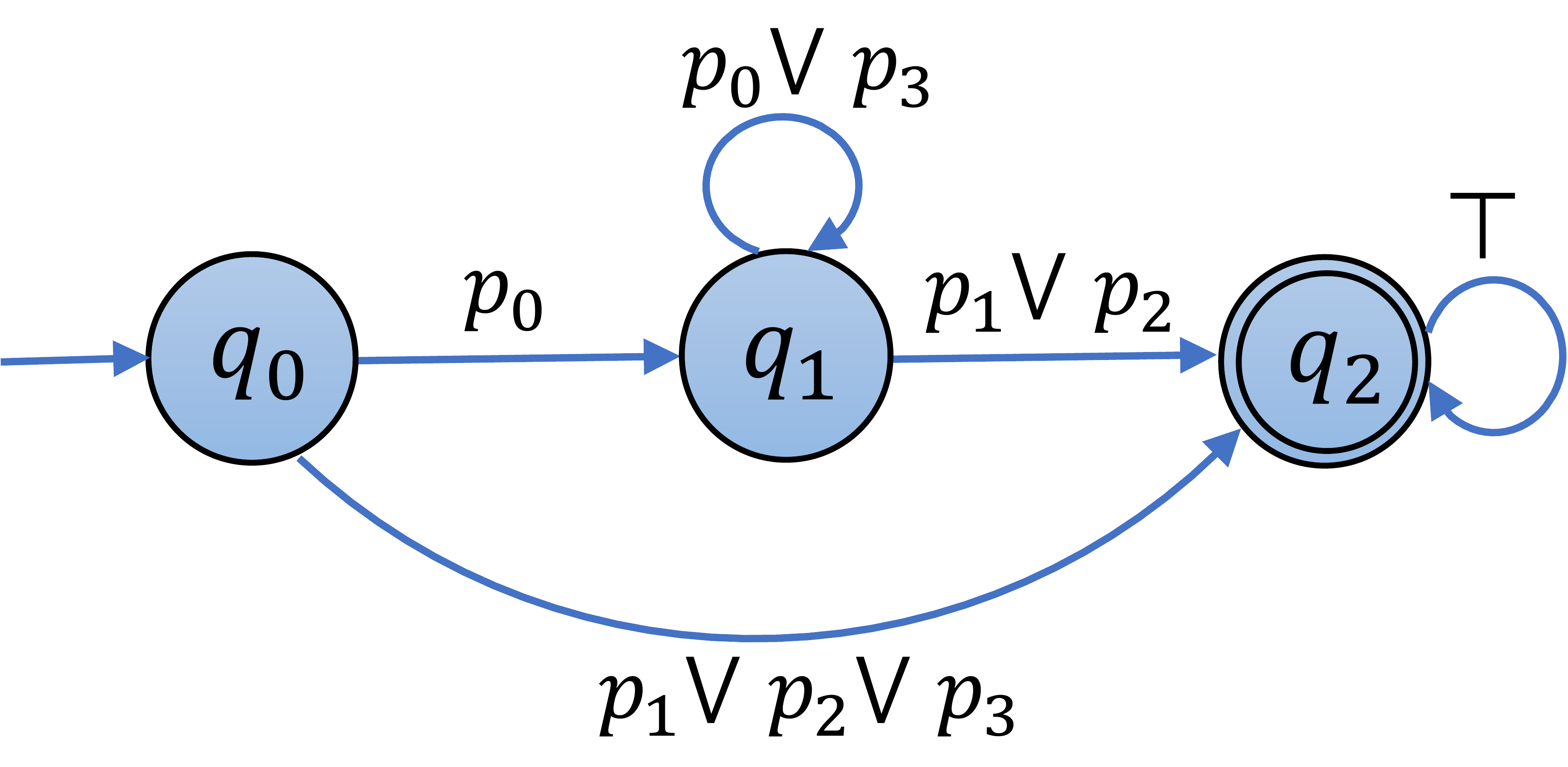}
	\caption{DCA $\mathcal A$ representing specification.}
	\label{fig:temp_automata}
	\vspace{-1em}
\end{figure}
\begin{figure}[t] 
	\vspace{-0.2cm}
	\centering
	\includegraphics[scale=0.42]{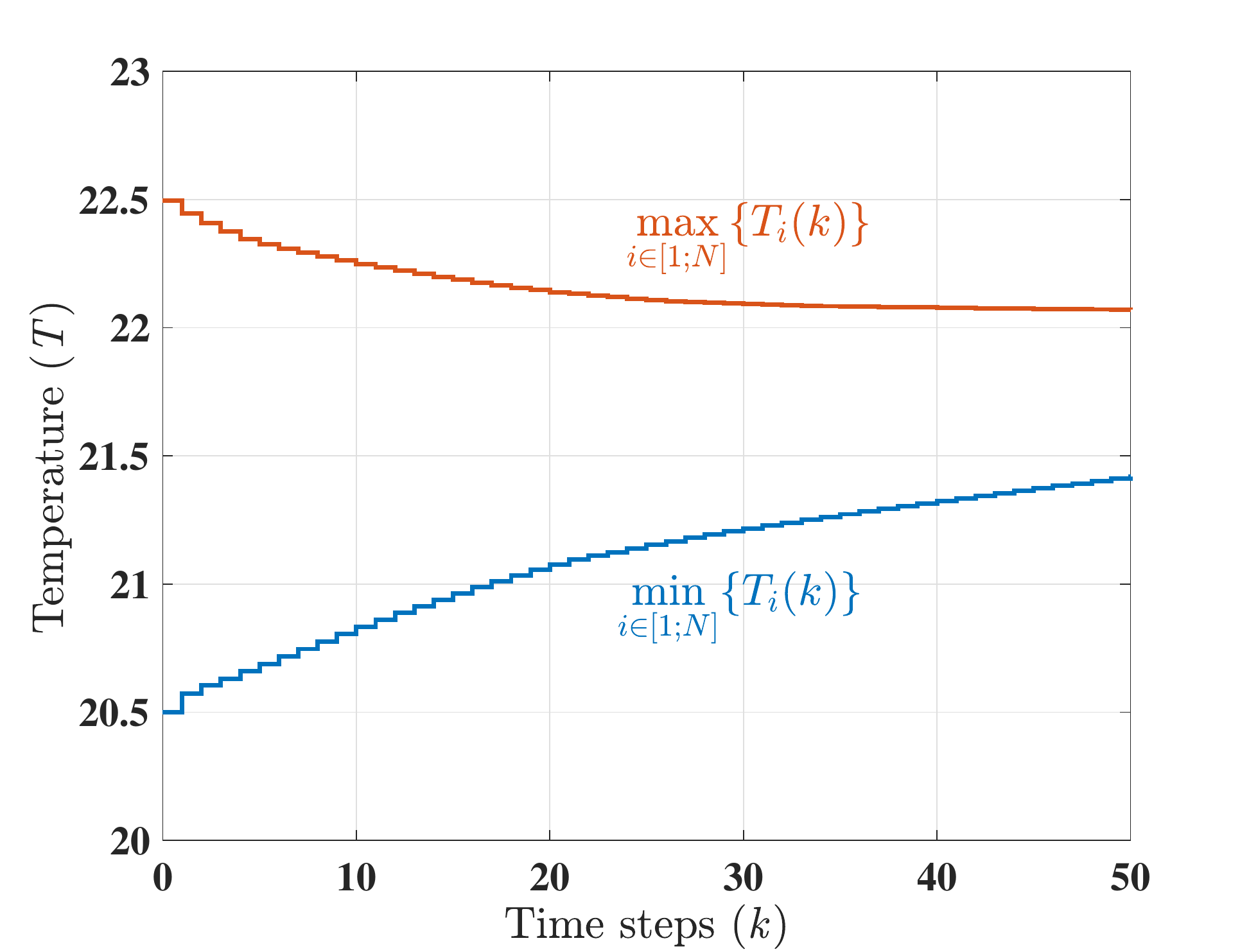}
	\caption{Bounds inside which trajectories are evolving.}
	\label{Compo_fig_resp1}
	\vspace{-1.5em}
\end{figure}
\subsection{Controlled Kuramoto Oscillators}
For the second case study, we consider the Kuramoto oscillator which has large applications in neural networks \cite{ermentrout}, pacemakers
in heart \cite{yongqiang}, automated vehicle coordination \cite{daniel}, and power grids \cite{florian}. In particular, we apply our approach to a variant of the controlled Kuramoto model from \cite{skardal2015control}. The dynamic for an interconnection of $N$-oscillators is given by:
\begin{align*}
\sys:
\theta(k+1)=\theta(k)+ \tau\Omega+\frac{\tau K}{N}\phi(\theta(k))+\nu(k),
\end{align*}
where $\theta(k)\!=\![\!\theta_1(k)\!;\!\dots\!;\!\theta_N(k)\!]\!\in\!\Theta \subseteq \![0,2\pi]^N$ is the phase of the oscillators, $\Omega\!=\![\!\Omega_1\!;\!\dots\!;\!\Omega_N\!]\!=\!\textbf{1}_N$ is the natural frequency of the oscillators,  $\phi(\theta(k))\!=\![\!\sum\limits_{j\in[1;N]}\!\sin(\theta_j(k)\!-\!\theta_1(k))\!;\!\dots\!;\!\sum\limits_{j\in[1;N]}\!\sin(\theta_j(k)\!-\!\theta_N(k))\!]$,
$K=1 $ is the coupling strength, $\tau=0.2$, and control input $\nu(k)=[\nu_1(k);\dots;\nu_N(k)]$, where $\nu_i(k)\in U_i=\{-0.6$, $-0.5, \ldots, 0.5$, $0.6 \}$, $i\in[1;N]$.
We consider regions of interest $X_0=[0, \frac{\pi}{3}]^N$,  $X_1=[\frac{5\pi}{12},\frac{7\pi}{12}]^N$, $X_2=[\frac{2\pi}{3},\pi]^N$, $X_3=[\pi,\frac{4\pi}{3}]^N$, $X_4=[\frac{17\pi}{12},\frac{19\pi}{12}]^N$ and $X_5=[\frac{5\pi}{3},2\pi]^N$, $X_6=X\setminus(X_0\cup X_1\cup X_2\cup X_3\cup X_4\cup X_5)$. The set of atomic propositions is given by $\Pi=\{p_0,p_1,p_2,p_3,p_4,p_5,p_6\}$ with labeling function $L(x_i)=p_i$ for all $x_i\in X_i$, $i\in\{0,1,2,3,4,5,6\}$. The objective is to compute a control policy ensuring satisfaction of the specification given by the accepting language of the DCA $\mathcal{A}$ in Figure \ref{kuramoto_automata}. This corresponds to the LTL specification $(p_1\wedge \square \neg(p_0\vee p_2))\vee (p_4\wedge \square \neg(p_3\vee p_5))$. In English, language of $\mathcal A$ entails that if we start in $X_1$, it will always stay away from $X_0$ or $X_2$ or if we start in $X_4$, it will always stay away from $X_3$ or $X_5$. Note that, the DBA $\mathcal{A}^c$ accepting complement of $\mathcal{L}(\mathcal{A})$ has exactly the same structure as in Figure \ref{kuramoto_automata}, but with the B{\"u}chi accepting condition. As described in Section \ref{compo_runs}, we have sets  $\mathcal{P}^{p_1}=\{(q_0,q_1,q_3),(q_1,q_3,q_3)\}$, $\mathcal{P}^{p_4}=\{(q_0,q_2,q_3),(q_2,q_3,q_3)\}$, and $\mathcal{P}^{p_0}=\mathcal{P}^{p_2}=\mathcal{P}^{p_3}=\mathcal{P}^{p_5}=\mathcal{P}^{p_6}=\{(q_0,q_3,q_3)\}$. Following Remark \ref{no_barrier}, there exists no barrier function corresponding to $(q_0,q_3,q_3)$, $(q_1,q_3,q_3)$, and $(q_2,q_3,q_3)$. This implies that we need to compute only two control barrier functions.
\begin{figure}[t] 
	\centering
	\includegraphics[scale=0.15]{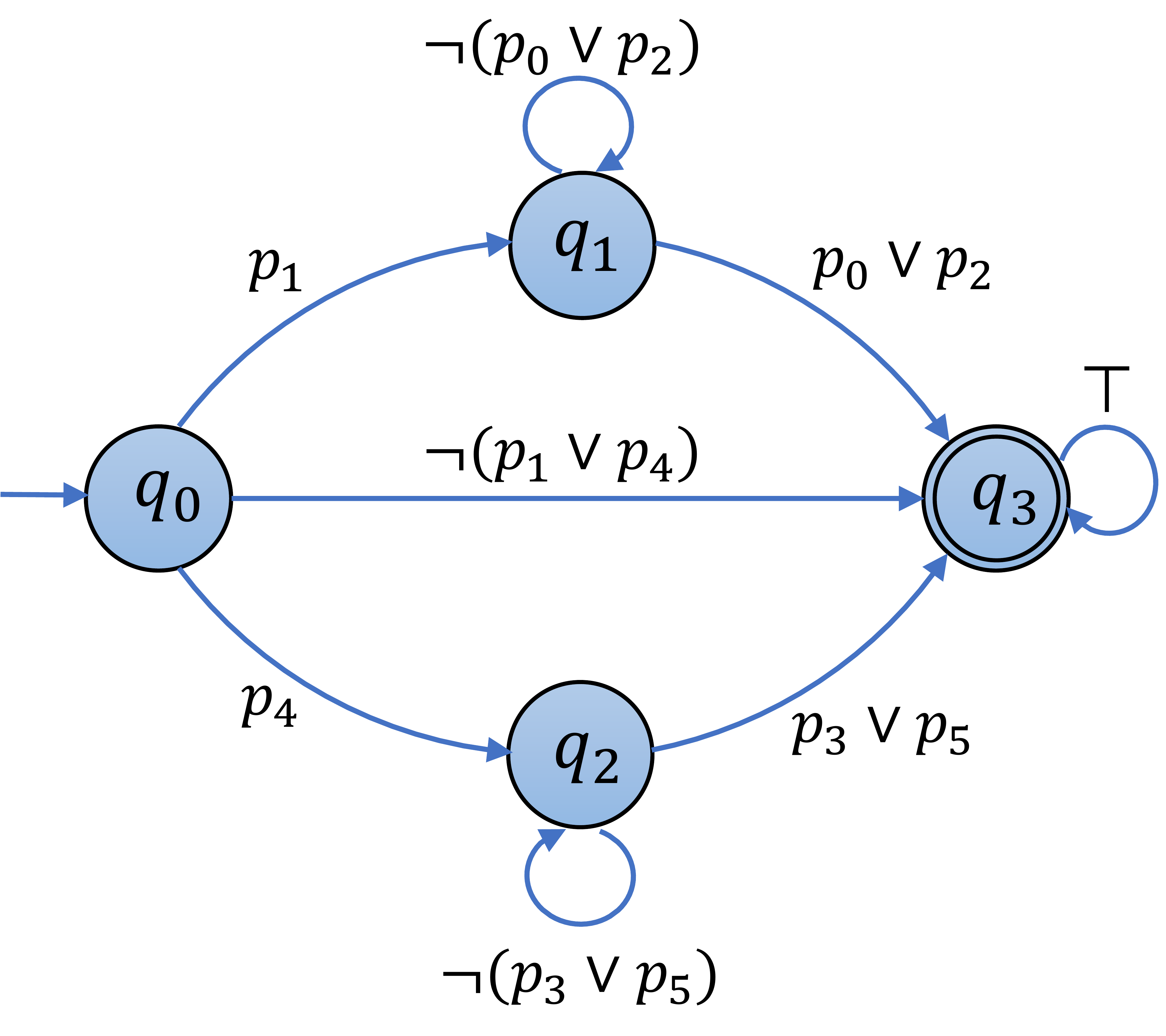}
	\caption{DCA $\mathcal A$ representing the specification.}
	\label{kuramoto_automata}
\end{figure}
Now by introducing subsystems $\sys_i$, $i\in[1;N]$, described by
\begin{align*}
\sys_i:\left\{
\begin{array}{rl}
\!\!\!\theta_i(k+1)=&\!\!\!\!\theta_i(k)\!+\!\tau\Omega_i\!+\!\frac{K\tau}{N}\sum_{j=1}^N\sin(\omega_{ij}(k)\!-\!\theta_i(k))+\nu_i(k),\\
\!\!\!\mathbf{y}_{i}(k)=&\!\!\!\!\theta_i(k),%
\end{array}\right.
\end{align*}
one can readily verify that $\sys=\mathcal{I}(\sys_1,\ldots,\sys_N)$, where $\omega_{ij}=\theta_j$.

To compute these control barrier functions, we apply our compositionality technique and utilize CEGIS approach, as discussed in Subsection \ref{cegis}. For the demonstration of the results, we fix N=10000. The order 2 polynomial local control barrier function corresponding to triplet $(q_0,q_1,q_3)$ is obtained for all $i\in[1;N]$, as $\B_i(x_i)=10.9427x_i^2-34.3775x_i+29$ with $X_{ai}=[\frac{5\pi}{12},\frac{7\pi}{12}]$, $X_{bi}=[0,\frac{\pi}{3}]\cup[\frac{2\pi}{3},\pi]$, constants $\overline{\epsilon}_i=\underline{\epsilon}_i=5$, functions $\hat{\alpha}_i(r)= 0.5 r^2$, $\hat\kappa_i(r)=1.6\times10^{-6} r$, and $\hat\gamma_{w_i}(r)=0.4368r^2$ $\forall r\in\R^+_0$ satisfying conditions in Lemma \ref{cbclem}. Then, by utilizing results in Theorem \ref{thm:3}, we get the overall control barrier function as $\B(x)\Let\max_{i\in[1;N]}\{\varphi_{i}^{-1}\B_i(x_i) \} $ with $\varphi_{i}=\mathcal{I}_d$ and the corresponding \emph{determinized} controller for each subsystem is given by $\mathbf{u}_i(x_i)=\min\{u_i\in U_i\mid\B_i(f_i(x_i,w^*_i,u_i))\leq\hat{\kappa}_i(\B_i(x_i))+\hat{\gamma}_{wi}(\Vert w^*_i\Vert)\}$ for an arbitrarily chosen $w^*_i\in W_i=[0,2\pi]^{N-1}$. Similarly, the local control barrier function corresponding to triplet $(q_0,q_2,q_3)$ is obtained for all $i\in[1;N]$, as $B_i(x_i)=7.2951x_i^2-68.7549x_i+175$ with $X_{ai}=[\frac{17\pi}{12},\frac{19\pi}{12}]$, $X_{bi}=[\pi,\frac{4\pi}{3}]\cup[\frac{5\pi}{3},2\pi]$, constants $\overline{\epsilon}_i=\underline{\epsilon}_i=15$, functions $\alpha_i(r)= 0.5 r^2$, $\hat\kappa_i(r)=1.6\times10^{-6} r$, and $\hat\gamma_{w_i}(r)=0.2912r^2$ for all $r\in\R^+_0$ satisfying conditions in Lemma \ref{cbclem}. The corresponding determinized controller here is also given as $\mathbf{u}_i(x_i)=\min\{u_i\in U_i\mid\B_i(f_i(x_i,w^*_i,u_i))\leq\hat{\kappa}_i(\B_i(x_i))+\hat{\gamma}_{wi}(\Vert w^*_i\Vert)\}$ for an arbitrarily chosen $w^*_i\in W_i$. Note that in both scenarios the small-gain condition in \eqref{SGC} holds with $\gamma_{ij}(s)=0.5824r$ and $\gamma_{ij}(s)=0.8736r$, $\forall r\in\R_{\geq0}$, respectively.
The switching mechanism for controllers to obtain hybrid control policy $\rho(x, q_{\mathfrak m})$ as defined in \eqref{eq_policy_compo} is shown in Figure \ref{swtch1}. Figure \ref{Compo_response2}(a) and Figure \ref{Compo_response2}(b) show the maximum and minimum bounds inside which all the state trajectories of the closed-loop system $\sys$ starting from an initial state in $X_1$ and $X_4$ evolves, respectively. From Figure \ref{Compo_response2}, one can readily check the satisfaction of the given specification. 

\begin{figure}[t] 
	\centering
	\includegraphics[scale=0.15]{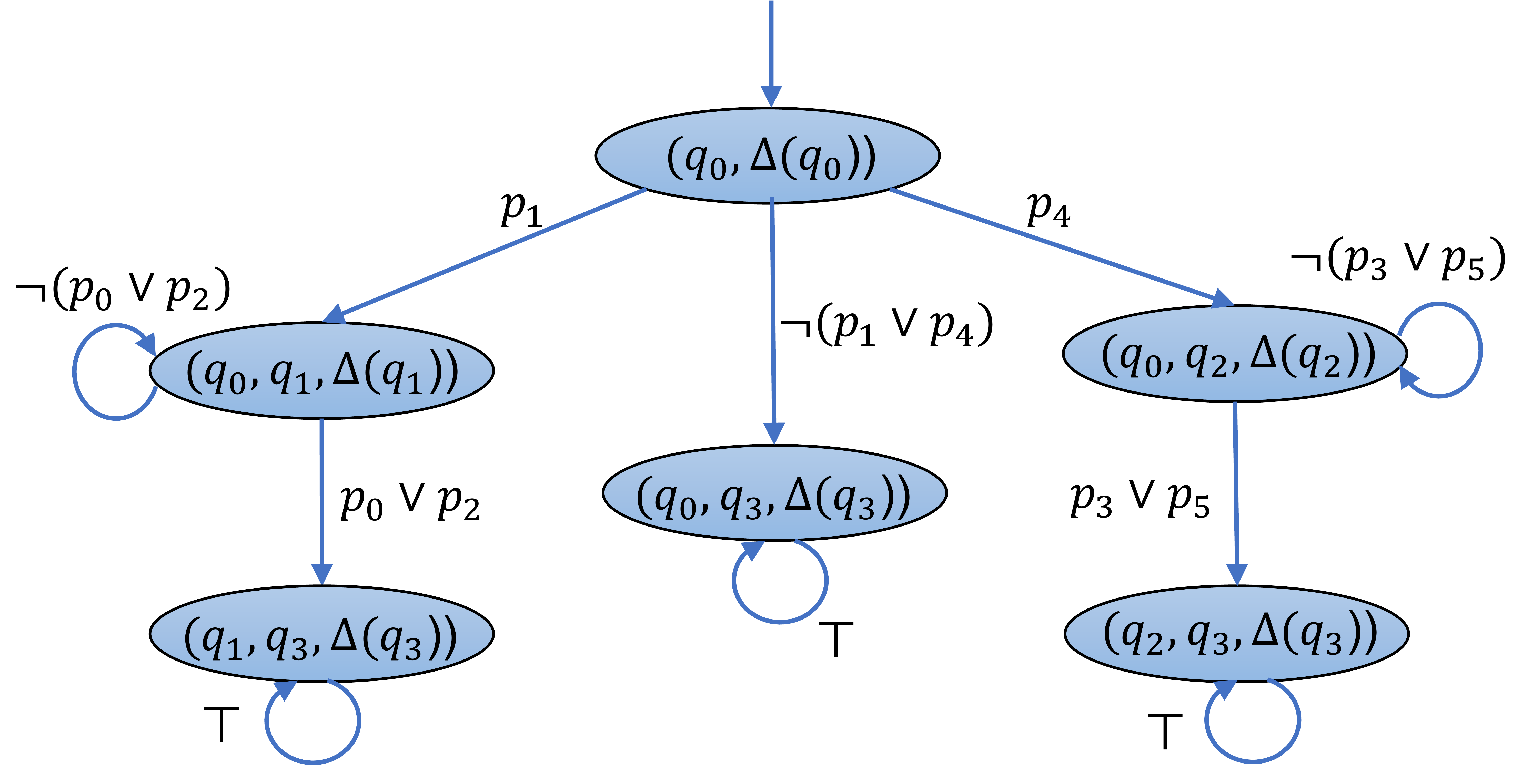}
	\caption{Switching mechanism for controllers.}
	\label{swtch1}
	\vspace{-1em}
\end{figure}
\begin{figure}[t] 
	\centering
	\subfigure[]{\includegraphics[scale=0.40]{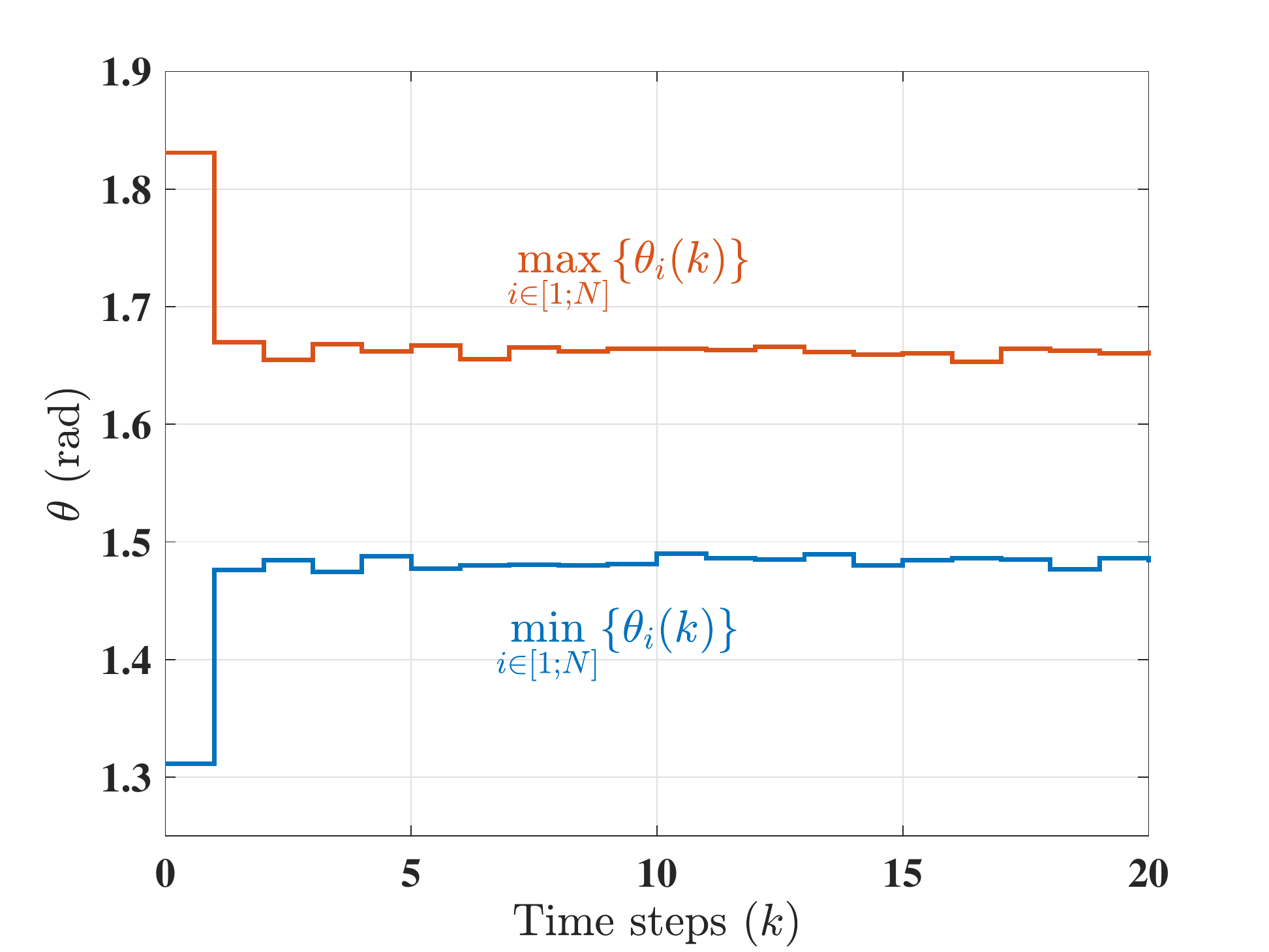}}
	\subfigure[]{\includegraphics[scale=0.40]{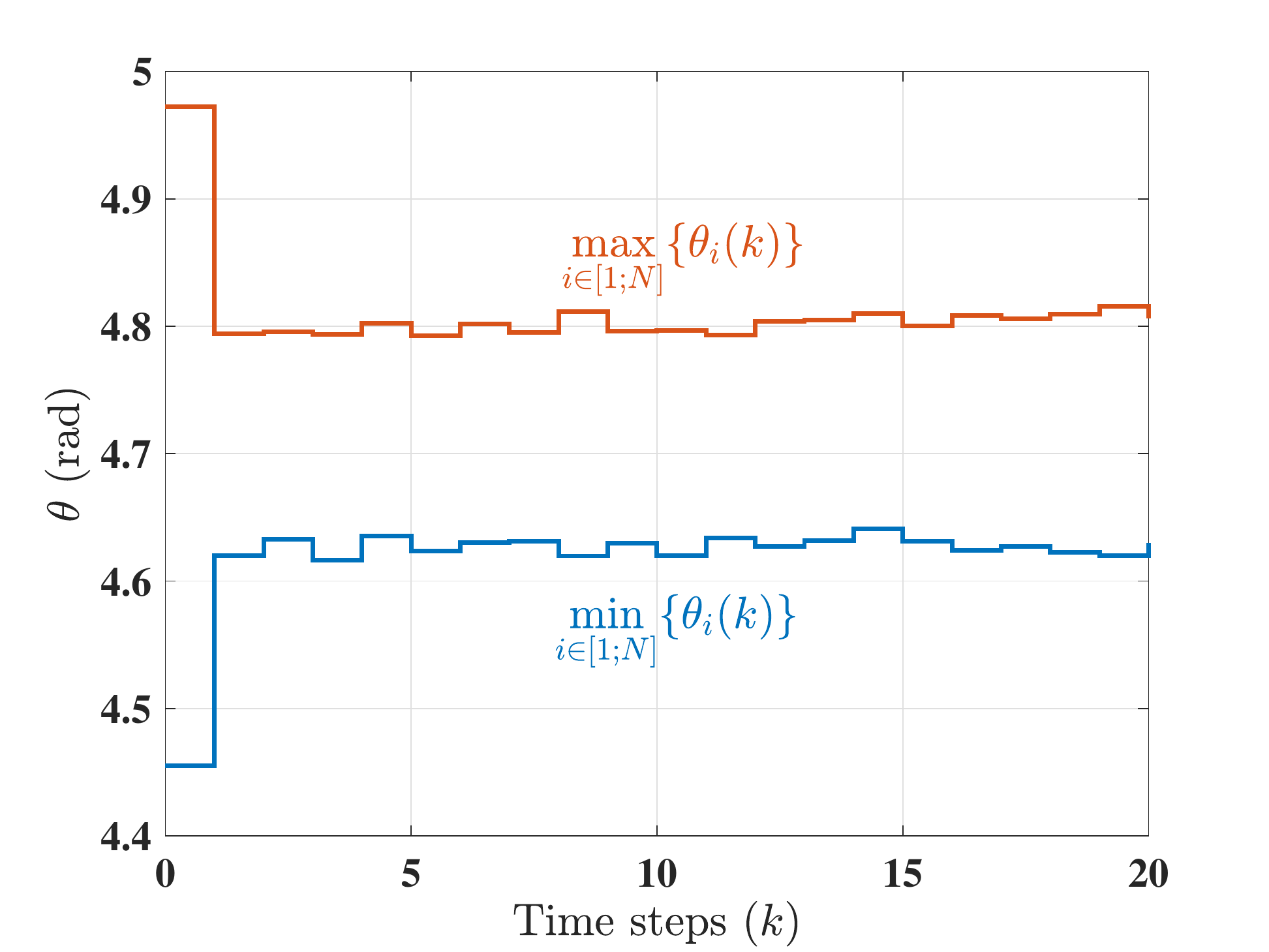}}
	\caption{Bounds inside which trajectories of the Kuramoto model with 10000 oscillators evolve with (a) an initial state starting in region $X_1$ (b) an initial state starting in region $X_4$.}
	\label{Compo_response2}
\end{figure}
\section{Conclusion}
In this work, we proposed a scheme for designing hybrid control policies for interconnected discrete-time control systems enforcing specifications expressed by deterministic co-B\"uchi automata. We first construct automata whose accepting languages are complements of the languages of the original co-B\"uchi automata. Then, we decompose the resulted specification, which is the complement of the original specification, to simpler reachability tasks, then provide a systematic technique to solve these simpler tasks by computing corresponding control barrier functions. We showed that such control barrier functions can be computed compositionally by utilizing a small-gain type reasoning and composing so-called local control barrier functions computed for subsystem. Moreover, we provided two systematic approaches to find local control barrier functions for subsystems based on the sum-of-squares optimization and counter-example guided inductive synthesis approach.

\bibliographystyle{alpha}
\bibliography{biblio}

\end{document}